\title{Impossibility Results for Strong
Linearizability:\\The Difficulty of Consistent Refereeing}
\titlerunning{Impossibility Results for Strong Linearizability}
\author{Hagit Attiya}{Technion - Israel Institute of Technology}{hagit@cs.technion.ac.il}{https://orcid.org/0000-0002-8017-6457}{Partially supported by the Israel Science Foundation (grant number 22/1425)}
\author{Armando Castañeda}{Universidad Nacional Autónoma de México (UNAM)}{armando.castaneda@im.unam.mx}{https://orcid.org/0000-0002-8017-8639}{Supported by the DGAPA PAPIIT project IN108723}
\author{Constantin Enea}{LIX, Ecole Polytechnique, CNRS and Institut Polytechnique de Paris}{cenea@irif.fr}{https://orcid.org/0000-0003-2727-8865}{Partially supported by the project SCEPROOF founded by the French ANR Agency and the NSF Agency from USA}
\authorrunning{Attiya, Castañeda and Enea}
\begin{document}

\maketitle

\funding{
Hagit Attiya is supported by the Israel Science Foundation
(22/1425 and 25/1849).
Armando Castañeda is supported by the research projects DGAPA-PAPIIT IN108723
and IN103126, and SECIHTI CBF-2025-I-39.
Constantin Enea is partially supported by project SCEPROOF founded by
the French ANR Agency and the NSF Agency from USA.}

\abstract{
This paper studies the relation between agreement
and strongly linearizable implementations of various objects.
This leads to new results about implementations of concurrent
objects from various primitives including \emph{window registers},
\emph{interfering primitives} and \emph{stacks}.

We identify that lock-free, respectively, wait-free, strongly linearizable
implementations of several concurrent objects entail
a form of agreement that does not require full consensus but cannot be 
implemented in a strongly-linearizable manner with combinations of non-universal primitives.
In both cases, lock-free and wait-free, this form of agreement requires
a distinguished process to \emph{referee} a \emph{competition}
that involves all other processes. Our results
show that \emph{consistent} refereeing of such competitions
(i.e., once resolved, the outcome cannot be revised in any extension of the execution)
requires high coordination power.

More specifically, two \emph{contest} objects are defined and used
to help characterize coordination constraints imposed by strong linearizability
in lock-free and wait-free
implementations, respectively.
Both objects are strictly weaker than consensus,
in the sense that they have a wait-free linearizable
(in fact, decisively linearizable)
implementation from reads and writes.

The contest objects capture strong linearizability in two complementary ways. 
First, they admit
strongly linearizable implementations from several high-level objects such as queues, snapshots,
and counters, so impossibility results for the contest objects immediately carry over to these
objects. 
Second, they admit powerful impossibility results for strong
linearizability that involve \emph{window registers}, 
\emph{interfering primitives} and \emph{stacks}, which are non-universal.
}

\section{Introduction}

A key way to construct complex distributed systems is through modular
composition of linearizable concurrent objects~\cite{HerlihyW1990}.
Yet linearizable objects do not always compose correctly
with randomized programs~\cite{HadzilacosHT2020,GolabHW2011},
or with programs that should not leak information~\cite{AttiyaE2019}.
This deficiency is addressed by \emph{strong linearizability}~\cite{GolabHW2011},
a restriction of linearizability,
which ensures that properties holding when a concurrent program
is executed in conjunction with an atomic object,
continue to hold when the program is executed
with a strongly linearizable implementation of the object.
Strong linearizability was shown~\cite{AttiyaE2019,DongolSW2023}
to preserve \emph{hyperproperties}~\cite{ClarksonS2010},
such as security properties and
probability distributions of reaching particular program states.

The \emph{universal} implementation of concurrent objects
from compare\&swap~\cite{H91},
is strongly linearizable~\cite{GolabHW2011}.
But any attempt to reduce the usage of costly, hardware-dependent compare\&swap
has been shown to be challenging and require subtle algorithms 
(see, for example,~\cite{HwangW2021}).

There have been several impossibility results showing there are no
strongly linearizable implementations of common objects
from widely-available primitives, like read, write, swap and test\&set,
even when they have linearizable implementations from the same primitives.
For example, max registers, snapshots, or monotonic counters
do not have \emph{wait-free} strongly linearizable implementations,
even with multi-writer registers~\cite{DenysyukW2015}.
Single-writer registers do not suffice even for \emph{lock-free}
strongly linearizable multi-writer registers, max registers,
snapshots, or counters~\cite{HelmiHW2012}.
While queues and stacks admit lock-free and wait-free linearizable
implementations from read, write, test\&set and swap~\cite{AGM07,HerlihyW1990},
there are no lock-free strongly linearizable implementations of those objects
from the same primitives~\cite{AttiyaCE2024}.

Strong linearizability seems to have a close connection with the
cornerstone \emph{consensus} task.
Indeed, most of the impossibility results rely on \emph{valency}
arguments~\cite[Chapter 7]{AttiyaE2014},
heavily used to study agreement tasks, in particular, consensus.
These proofs are ad-hoc and, in some cases, quite complicated.
Other proofs~\cite{AttiyaCE2024} are by reduction from agreement tasks,
like \emph{set agreement}, a generalization of consensus,
but they rely on the non-trivial assumption that the underlying shared
memory base-objects or primitives are readable.\footnote{Readability might change the coordination
power of a high-level object or primitive. For example, queues and stacks, whose consensus
numbers are both two, have infinite consensus number when read operations are added.}

\paragraph*{Our results and their ramifications}

This paper aims to elucidate the relation between agreement
and strongly linearizable implementations of various objects.
By doing so, we are able to derive a host of new results about
implementations of concurrent objects from various primitives,
including in particular, \emph{$w$-window registers}~\cite{MostefaouiPR18},
whose consensus number is exactly $w$,
and atomic \emph{stacks}, whose consensus number is $2$,
that were not considered in prior work.

We identify that lock-free and wait-free strongly linearizable
implementations of several concurrent objects entail
a form of agreement that is weaker than consensus but impossible
to strongly linearizable implement with combinations of non-universal primitives.
In both cases, lock-free and wait-free, this form of agreement requires
a distinguished process to \emph{referee} a \emph{competition}
that involves all other processes. Our main results, explained below,
show that \emph{consistent} refereeing of such competitions
(i.e. the outcome of the competition does not change in extensions of the current execution,
as in strongly linearizable implementations)
requires high coordination power.

More in detail, we define two
\emph{contest} objects which are strictly weaker than consensus and
helps to derive impossibility results for
lock-free or wait-free strongly linearizable implementations of several concurrent objects.
These objects are strictly weaker than consensus, since, unlike consensus,
they have a wait-free linearizable implementation from reads and writes.
The implementations are in fact decisive linearizable~\cite{BSLS2024}.
Furthermore, the objects admit
strongly linearizable implementations from many ``high-level'' objects like queues, snapshots, counters, etc, and therefore, impossibility results about
them have many interesting corollaries and
also admit powerful impossibility results for strong linearizability that involve stacks, window registers and many non-universal synchronization primitives.

For impossibility results that concern lock-freedom, we define a \emph{contest} object in which,
similarly to \emph{id-consensus}~\cite{Chandra1996},
the id of one among several \emph{competitor} processes is picked.
However, in the contest object a single, non-competing process
is the \emph{referee} who picks the winner and is the \emph{only one knowing this information}.
Unlike agreement tasks, competitor processes in a contest object
do not need to agree on an outcome and always return the same response;
only a single distinguished process observes the result of the competition.
Indeed, the contest object admits a decisive linearizable implementation from reads and writes
(Theorem~\ref{thm:contest-from-read-write}).

Our first main contribution is that there is no \emph{lock-free strongly linearizable}
implementation for~$n$ processes of a contest object from \emph{atomic stacks, $(n-2)$-window
registers and interfering primitives} (Theorem~\ref{thm:no contest from window}).
The class of interfering primitives~\cite{H91} has consensus number two,
and includes read, write, test\&set, swap, among other useful primitives.
The proof relies on a simple valency argument.
It shows that even relatively strong but non-universal synchronization primitives
do not suffice for lock-free strongly linearizable contest implementations.
We complement this impossibility result by presenting a wait-free strongly
linearizable implementation of a contest object from $(n-1)$-window registers
(Theorem~\ref{thm:contest-from-window}).

Since a queue can be used in a wait-free strongly linearizable
implementation of a contest object,
it follows that there is no lock-free strongly linearizable implementation
of a queue from stacks, $(n-2)$-window registers and interfering primitives
(Corollary~\ref{cor:lf sl queue and stack}).

Some very useful objects, including counters, snapshots and max registers,
have strongly linearizable implementations for a bounded number of invocations,
which are also \emph{lock-free}.
To show that they do not have \emph{wait-free} strongly linearizable
implementations for an unbounded number of operations,
we define the \emph{long-lived} contest object.
In this object, competitor processes can participate any number of times in the competition,
and the referee must return an integer $x$ such that every competitor
has participated \emph{no more} than $x$ times so far. 
Again, this information is known only by the referee.
The long-lived contest object has a wait-free decisively linearizable
implementation from reads and writes (Theorem~\ref{thm:ll-contest-from-read-write}).

Using a group valency argument~\cite{DenysyukW2015},
we prove that there is no wait-free strongly linearizable implementation
of a long-lived contest object from reads and writes only, for $n \geq 3$
(Theorem~\ref{thm:no ll contest from read write}).
We extend this result also to implementations from $w$-window registers
and test\&set objects, for $n \geq 4$ and $3w \leq n$
  (Theorem~\ref{thm:ll-contest-from-window-ts}).
Since either a counter or a snapshot or a max register or a fetch\&increment
or a fetch\&add can be used in a
  wait-free strongly linearizable implementation of a contest object,
  this also means there are no wait-free strongly linearizable implementations
  of these objects from the same primitives
  (Corollary~\ref{cor:no wf sl counter from rw} and
  Corollary~\ref{cor:no wf sl counter from window}).

\paragraph*{Additional related work}

If one only requires \emph{obstruction-freedom},
which ensures an operation to complete only if it executes alone,
then any strongly linearizable object can be implemented
from single-writer registers~\cite{HelmiHW2012}.

When considering the stronger property of \emph{lock-freedom},
which requires that as long as some operation is pending, some
operation completes, single-writer registers are not sufficient
for implementing strongly linearizable multi-writer registers,
max registers, snapshots, or counters~\cite{HelmiHW2012}.
If the implementations can use multi-writer registers, though, it
is possible to get lock-free implementations of max registers,
snapshots, and monotonic counters~\cite{DenysyukW2015}, as well as of
objects whose operations commute or overwrite~\cite{OvensW19}.

It was also shown~\cite{AttiyaCH2018} that there is no
lock-free implementation of a queue or a stack with universal helping,
from objects whose readable versions have consensus number less than the
number of processes, e.g., readable test\&set.
The proof is by reduction to consensus, which is possible due to the readability assumption.
Universal helping is a formalization of helping mechanisms used
in universal constructions, and that can be understood as an
eventual version of strong linearizability.
Using a similar reduction, it is shown that lock-free strongly linearizable queues and
stacks are impossible from interfering primitives~\cite{AttiyaCE2024}.
Again, the reduction is based on the assumption that base objects are readable.
Readability does not change the consensus number of interfering primitives.
Through a valency argument (instead of a reduction),
we extend this impossibility result for queues to encompass $(n-2)$-window registers and stacks
(Corollary~\ref{cor:lf sl queue and stack}),
and remove the readability assumption.

For the even stronger property of {\em wait-freedom}, which requires every
operation to complete, it is possible to implement strongly linearizable
bounded max registers
from multi-writer registers~\cite{HelmiHW2012}, but it is impossible
to implement strongly linearizable max registers, snapshots, or monotonic
counters~\cite{DenysyukW2015} even with multi-writer registers.
The impossibility in~\cite{DenysyukW2015}, based on group valency arguments,
is for counters and then extended, via a reduction, to max registers and snapshots.
Our impossibility is inspired by the group valency argument used in~\cite{DenysyukW2015},
and for the case of four or more processes,
it strengthens the result to encompass test\&set and window registers.

It has been shown that, for any number of processes,
there are wait-free linearizable implementations
of fetch\&increment and fetch\&add from reads, writes and test\&set~\cite{AfekWW1993,Afek1999W}.
Our results show that this is no longer true if we consider strong linearizability instead,
and even if $w$-window registers are available, with $3w \leq n$.

\section{Model of Computation}

We consider a standard shared memory system with $n$ asynchronous processes, $p_0,\ldots,p_{n-1}$,
which may crash at any time during an execution.
Processes communicate with each other by applying \emph{atomic}
operations to shared \emph{base objects}.

A \emph{(high-level) concurrent object}
is defined by a state machine consisting of a set of states,
a set of operations, and a set of transitions between states.
Such a specification is known as \emph{sequential}.
An \emph{implementation} of an object $T$ is a distributed
algorithm $\mathcal{A}$ consisting of a local state machine
$\mathcal{A}_p$, for each process $p$.
$\mathcal{A}_p$ specifies which primitive operations on base objects $p$ applies
and which local computations $p$ performs in order to return a response
when it invokes an operation of $T$.
Each of these base object operation invocations and local computations is a \emph{step}.
For the rest of this section,
fix an implementation $\mathcal{A}$ of an object $T$.

We use $w$-window registers,
whose consensus number is exactly~$w$~\cite{MostefaouiPR18}.
Specifically,
a $w$-\emph{window register}~\cite{MostefaouiPR18} stores the sequence
of the last $w$ values  written to it (or the last $x$ values
when only  $x < w$ values have been written).
A $write$ with input $v$ appends $v$ at the end of the sequence and
removes the first one if the sequence is already of size $w$.
A $read$ operation returns the current sequence.
A standard register is a $1$-window register.
\emph{Interfering} primitives~\cite{H91} include read, write, test\&set,
swap and fetch\&add operations, among others.
More specifically, at any state of a register,
application of any pair of interfering primitives either commute or one overwrites the other.

A \emph{configuration} $C$ of the system contains the states of
all shared base objects and processes.
In an \emph{initial} configuration,
base objects and processes are in their initial states.
Given a configuration $C$ and a process $p$, $p(C)$ is the configuration
after $p$ takes its next step in~$C$.
Moreover, $p^{0}(C) = C$ and for every $n\in \mathbb{N}$,
$p^{n+1}(C) = p(p^{n}(C))$.
Note that the next step $p$ takes in $C$ depends only on
its local state in $C$.

An \emph{execution of $\mathcal{A}$ starting from $C_0$}
is a (possibly infinite) sequence
$C_0 e_1 C_1 e_2 C_2 \cdots$, where each $e_i$ is a step of a process,
or an invocation/response of a high-level operation by a process
and if $e_i$ is a step, then $C_{i+1} = e_i(C_i)$;
furthermore, the sequence satisfies:
\begin{enumerate}
\item Each process can invoke a new (high-level) operation only
 when its previous operation (if there is any) has a corresponding response,
 i.e., executions are \emph{well-formed}.
\item A process takes steps only between an invocation and a response.
\item For any invocation of process $p$,
  the steps of $p$ between that invocation and the following response of $p$,
  if there is one,
  correspond to steps of $p$ that are specified by $\mathcal{A}_p$.
\end{enumerate}

An execution $\beta$ is an \emph{extension} of a finite execution $\alpha$
if $\alpha$ is a prefix of $\beta$.
A configuration $C$ is \emph{reachable} if there is a finite execution~$\alpha$ 
starting from an initial configuration whose last configuration is $C$;
we say that $\alpha$ \emph{ends} with $C$.
A configuration $C'$ is \emph{reachable} from a configuration $C$
if there is a finite execution starting from $C$ that ends with $C'$.
Two configurations $C$ and $C'$ are \emph{indistinguishable} to process $p$
if the state of every base object and the state of $p$
are the same in $C$ and $C'$.

An operation in an execution is \emph{complete} if both its invocation
and response appear in the execution.
An operation is \emph{pending} if only its invocation appears in the execution.
An implementation is \emph{wait-free} if every process completes each of
its operations in a finite number of its steps;
formally, if a process executes infinitely many steps in an execution,
it completes infinitely many operations.
An implementation is \emph{lock-free} if whenever processes
execute steps, at least one of the operations terminates;
formally, in every infinite execution,
infinitely many operations are complete.
Thus, a wait-free implementation is lock-free
but not necessarily vice versa.

In an execution, an operation $op$ \emph{precedes}
another operation $op'$ if
the response of $op$ appears before the invocation of $op'$.

\emph{Linearizability}~\cite{HerlihyW1990} is the standard notion used
to identify a correct implementation.
Roughly speaking, an implementation is linearizable if each operation
appears to take effect \emph{atomically} at some time between its invocation
and response, hence operations' real-time order is maintained.
Formally, let $\mathcal{A}$ be an implementation of an object $T$.
An execution $\alpha$ of $\mathcal{A}$ is \emph{linearizable} if
there is a sequential execution $S$ of $T$
(i.e., a sequence of matching invocation-response pairs, starting with an invocation)
such that:
\begin{enumerate}
\item
$S$ contains every complete operation in $\alpha$ and
some of the pending operations in $\alpha$.
Hence, the output values in the matching responses of an invocation in
$S$ and the complete operations in $\alpha$ are the same.
\item
If $op$ precedes $op'$ in $\alpha$,
then $op$ precedes $op'$ in $S$;
namely, $S$ respects the \emph{real-time} order in~$\alpha$.
\end{enumerate}
$\mathcal{A}$ is \emph{linearizable} if all its executions are linearizable.

An implementation of a data type is \emph{strongly
linearizable}~\cite{GolabHW2011} if linearizations of executions
can be defined by appending operations to linearizations of prefixes.
Formally,
there is a \emph{prefix-closed} function $L$ mapping each execution
to a linearization such that
if an execution $\alpha$ is a prefix of an execution $\beta$,
then $L(\alpha)$ is a prefix of $L(\beta)$. \emph{Decisive linearizability}~\cite{BSLS2024} is a weakening of strong linearizability which requires that there is a function $L$ mapping each execution
to a linearization, such that if an execution $\alpha$ is a prefix of an execution $\beta$, then $L(\alpha)$ is a \emph{subsequence} of $L(\beta)$.\footnote{A sequence $\sigma$ is a sub-sequence of another sequence $\sigma'$ if $\sigma$ can be obtained from $\sigma'$ by deleting some symbols.}

\section{Motivating Example: Impossibility of Lock-Free
Strongly Linearizable Queues from Interfering Primitives}

We start with a crisp example that demonstrates how strongly linearizable
objects can be used to pick a decision among a set of competing values,
and that this property can be used to show they cannot be implemented
from interfering primitives.

\begin{theorem}
There is no lock-free strongly linearizable queue implementation
from interfering primitives, for {four} or more processes.
\end{theorem}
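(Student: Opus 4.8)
The plan is to run a valency argument in which a distinguished \emph{referee} process reads off the decision encoded by the queue, and to exploit the prefix-closed linearization guaranteed by strong linearizability to make that decision irrevocable. Fix a purported lock-free strongly linearizable implementation from interfering primitives for three processes $p_0, p_1, r$ (any further processes stay idle). The experiment is the following: competitor $p_0$ performs $\mathit{enqueue}(0)$, competitor $p_1$ performs $\mathit{enqueue}(1)$, and the referee $r$ performs a single $\mathit{dequeue}$ that names the winner. Let $L$ be the fixed prefix-closed linearization function. For a configuration $C$ reached by an execution $\alpha$ that involves only steps of $p_0$ and $p_1$, I call $C$ \emph{$v$-decided} if $\mathit{enqueue}(v)$ is the first enqueue to appear in $L(\alpha)$, and \emph{bivalent} if no enqueue has yet been linearized. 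Because the queue starts empty, being the first enqueue linearized means being at the head, so a subsequent $\mathit{dequeue}$ by $r$ returns $v$; and prefix-closedness of $L$ makes $v$-decidedness stable under extension, since the first element of $L(\alpha)$ can never change as $L$ grows.

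First I would check that the initial configuration (both invocations issued, no steps taken) is bivalent: running $p_0$ to completion before $p_1$ yields head $0$, and the reverse order yields head $1$. Next, lock-freedom rules out an execution that stays bivalent forever, since scheduling $p_0$ and $p_1$ fairly forces their enqueues to complete, hence to be linearized, committing an order and contradicting perpetual bivalency. A standard argument then produces a \emph{critical} configuration $C$: $C$ is bivalent, $p_0(C)$ is $0$-decided, and $p_1(C)$ is $1$-decided. Since $r$ has not yet taken a step, the two decisive steps are the next steps $s_0$ of $p_0$ and $s_1$ of $p_1$. These must be applied to the \emph{same} base object $X$, because operations on distinct objects commute, which would make $p_0(p_1(C))$ and $p_1(p_0(C))$ identical configurations that are simultaneously $0$-decided (reachable from $p_0(C)$) and $1$-decided (reachable from $p_1(C)$), a contradiction.

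The heart of the proof, and the step I expect to be the main obstacle, is the case analysis afforded by the interfering hypothesis at $X$: $s_0$ and $s_1$ either commute or one overwrites the other. Commutativity is dispatched exactly as the distinct-object case. In the overwrite case, say $s_0$ overwrites $s_1$, the configurations $p_0(C)$ and $p_0(p_1(C))$ leave $X$ and the local states of $p_0$ and $r$ identical, differing only in the state of $p_1$; hence they are indistinguishable to $p_0$ and $r$. Running from $p_0(C)$ an extension that uses only $p_0$ and $r$ (let $p_0$ finish, then let $r$ dequeue) returns $0$, because $p_0(C)$ is $0$-decided; applying the \emph{same} extension from $p_0(p_1(C))$ must therefore also return $0$. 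But $p_0(p_1(C))$ is reachable from the $1$-decided configuration $p_1(C)$, so prefix-closedness of $L$ forces $r$'s dequeue to return $1$ there --- a contradiction. The symmetric subcase, in which $s_1$ overwrites $s_0$, is identical. This is precisely where both strong linearizability and the third process are indispensable: prefix-closedness pins the committed order despite the physical overwrite at $X$, while the referee, unlike a competitor, cannot detect that the overwrite occurred (a competitor would observe its own altered local state, which is why two processes, or plain linearizability, do not suffice). This is also what forces the requirement of three or more processes.
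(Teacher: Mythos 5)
Your overall plan (a valency argument over the two enqueuers, a critical configuration, and a case analysis driven by the interfering hypothesis) matches the paper's proof, and your treatment of the three cases at the critical configuration --- in particular the overwrite case, where strong linearizability and the third process do the work --- is essentially the paper's. However, there is a genuine gap at the step you summarize as ``a standard argument then produces a critical configuration $C$: $C$ is bivalent, $p_0(C)$ is $0$-decided, and $p_1(C)$ is $1$-decided.'' The problem is your choice of valency notion. You define $v$-decided and bivalent in terms of what the fixed linearization function $L$ has \emph{already committed}, and the descent ``keep taking competitor steps while no enqueue is linearized'' only yields a bivalent $C$ such that both $p_0(C)$ and $p_1(C)$ are decided --- it gives no control over \emph{which} values they commit. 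Nothing rules out that both successors are $0$-decided: prefix-closedness permits $L$ to be lazy, i.e., to keep an operation out of $L(\alpha)$ long after the outcome of the race is already forced, so ``nothing committed yet'' does not imply ``both outcomes still achievable.'' If $p_0(C)$ and $p_1(C)$ commit the same value, all three of your interference cases collapse, since each derives its contradiction from one successor being $0$-decided and the other $1$-decided, and the proof never gets off the ground.

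The paper avoids this by defining univalence and bivalence through \emph{reachable dequeue outcomes}, and it invokes strong linearizability only to prove the lemma that a configuration with a completed enqueue is univalent (your ``committed $\Rightarrow$ pinned'' observation, applied to both the case where the dequeue already appears in $L$ and the case where only an enqueue does). With outcome-based bivalence the two successors of the critical configuration \emph{must} have different valencies: every extension of $C$ begins with a step of $p_0$ or $p_1$, because the referee is only enabled once some enqueue has completed and none has at $C$; hence if both successors were $v$-univalent, $C$ itself would be $v$-univalent, contradicting bivalence. That inference is exactly what your commitment-based notion cannot support. To repair your proof, either adopt the outcome-based definitions and demote your commitment argument to the role of the key lemma, or strengthen your notion of ``bivalent'' to additionally require that both outcomes remain reachable, and show the descent preserves that invariant. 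A second, smaller issue: since you issue both invocations at the initial configuration, your claim that running $p_0$ to completion and then $r$ ``yields head $0$'' needs an extra indistinguishability step --- plain linearizability would allow the dequeue to return $1$ by linearizing the pending, step-less $\mathit{enqueue}(1)$ first; the paper sidesteps this by considering executions in which the other enqueue is never invoked at all.
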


\begin{proof}
Towards a contradiction, suppose that there is such an implementation $A$.
Let us consider all \emph{failure-free} executions where
{process $p_i$, $i \in \{0,1,2\}$, executes a single $enqueue(i)$ operation,
and~$p_3$ executes a single $dequeue()$ operation}
starting once \emph{at least} one enqueue operation is completed.
By lock-freedom of $A$, eventually the operation of $p_3$ starts and completes.
By the linearizability of $A$, it is not possible that $p_3$'s dequeue operation returns empty.

A configuration $C$ is \emph{multivalent} if for distinct values $v,v' \in \{0,1,2\}$,
there are extensions of $C$ in which $p_3$'s dequeue returns $v$ and $v'$, respectively.
A configuration $C$ is $v$-\emph{univalent}, {$v \in \{0,1,2\}$},
if in all extensions of $C$ where $p_3$'s dequeue returns a value,
the value is $v$.
Recall that in both cases we focus on executions where $p_3$'s dequeue 
starts after at least one enqueue is completed.

For {$i \in \{0,1,2\}$}, consider the following execution $\alpha_i$:
process $p_i$ executes until completion $enqueue(i)$,
and then $p_3$ executes $dequeue()$ until completion.
Execution $\alpha_i$ exists due to lock-freedom of $A$,
and the dequeue operation of $p_3$ returns $i$,
due to linearizability of $A$.
This implies that the initial configuration $C_0$ is multivalent.

Since the initial configuration is multivalent, it follows that
every reachable configuration is either 0-univalent, 1-univalent,
2-univalent or multivalent;
these properties are mutually exclusive.

It is obvious that if $p_3$ completes its dequeue, then the configuration is univalent.
Interestingly, with strong linearizability the same holds also when one of the other
process completes its enqueue operation.
Note that this is the only place where strong linearizability is used.

\begin{claim}
\label{claim-queue}
A reachable configuration in which the enqueue of {$p_0, p_1$ or $p_2$} is complete
is univalent.
\end{claim}

\begin{proof}
Consider a prefix-closed linearization function $L$ of $A$,
which is guaranteed to exist due to strong linearizability of $A$.
Let $\alpha$ be an execution that ends with configuration $C$,
in which at least one enqueue operation completes.
We observe the following cases:
\begin{itemize}

\item The dequeue operation of $p_3$, denoted $op$, appears in $L(\alpha)$ returning
value {$v \in \{0,1,2\}$}.
If $op$ is complete then clearly $v$ is the non-empty value that $p_3$
returns in every extension of $C$.
Thus $C$ is univalent.
If $op$ is pending, then consider an extension $C'$ of $C$,
and let $\alpha'$ be an execution that extends $\alpha$ and ends at $C'$.
Since $L$ is prefix-closed, $L(\alpha)$ is prefix of $L(\alpha \cdot \alpha')$,
hence $op$ appears in $L(\alpha \cdot \alpha')$ returning $v$. Therefore, if $op$ is completed in $C'$,
it returns $v$. Thus $C$ is univalent.

\item The dequeue operation of $p_3$ does not appear in $L(\alpha)$.
Let $v$ the value enqueued by the first enqueue operation in $L(\alpha)$.
Consider any extension $C'$ of $C$ such that the dequeue operation of
$p_3$ completes, returning $u$, and let $\alpha'$ be an execution that extends $\alpha$ and ends at $C'$.
Since $L$ is prefix-closed, $L(\alpha)$ is prefix of $L(\alpha \cdot \alpha')$,
hence $v$ is the first enqueue value in $L(\alpha \cdot \alpha')$,
from which follows that $u=v$, due to linearizability.
Thus $C$ is univalent. \qedhere
\end{itemize}

\end{proof}

Starting with the initial configuration $C_0$, which is multivalent,
we let {$p_0, p_1$ and $p_2$} take steps as long as the next configuration is multivalent.
This execution cannot go forever since the operations of {$p_0, p_1$ and $p_2$}
eventually terminate, due to lock-freedom,
and every configuration where an enqueue is terminated is univalent, due to the claim above.
Thus, we reach a \emph{critical} configuration $C$ such that
(1) {$p_0, p_1$ and $p_2$} are active,
(2) $p_3$ is idle,
(3) $C$ is multivalent, and
(4) a step of one of {distinct processes $p$ and $q$ makes the configuration $v$-univalent,
and a step of the other makes the configuration $u$-univalent, with $u \neq v$.
Without loss of generality, let $p$ and $q$ be processes $p_0$ and~$p_1$.}

Let $R_i$, $i \in \{0,1\}$, be the base object $p_i$ is about to access
with primitive $f_i$, in the configuration~$C$.
We complete the proof by considering the following cases:
\begin{itemize}

\item $R_0 \neq R_1$. {This case leads to a contradiction because
the configurations $p_1 p_0 (C)$ and $p_0 p_1 (C)$ have different valencies, and
both are indistinguishable to $p_2$ and $p_3$, which implies that
$p_3$ cannot return distinct values in the extensions of $p_1 p_0 (C)$ and $p_0 p_1 (C)$
in which first $p_2$ runs solo until it completes its enqueue, and then $p_3$ starts and completes its dequeue.}

\item $R_0 = R_1$ and $f_0$ and $f_1$ commute. We reach a contradiction as well,
because $p_1 p_0 (C)$ and $p_0 p_1 (C)$ are indistinguishable to {$p_2$ and $p_3$}.

\item $R_0 = R_1$, and without loss of generality,
suppose that $f_0$ overwrites $f_1$.
Therefore, configurations $p_0(C)$ and $p_0 p_1(C)$ are indistinguishable to {$p_2$ and $p_3$},
which is a contradiction. \qedhere
\end{itemize}
\end{proof}

\section{Impossibility of Lock-Free Strongly Linearizable Implementations: The One-Shot Contest Object}
\label{sec:lock-free}

We extend the example of a queue by considering a contest object, defined next.
The \emph{contest} sequential, one-shot object provides two operations,
$compete()$ that can be invoked by processes $p_1, \hdots, p_{n-1}$, the \emph{competitors},
and $decide()$ that can be invoked by process $p_0$, the \emph{referee}.
The specification is that $compete()$ always returns $true$, and $decide()$ returns the index
of the first process that executed $compete()$, or $false$ if there is no such operation.
The contest object is weaker than a consensus object,
as it can be implemented from reads and writes.

\begin{restatable}{theorem}{contestfromreadwrite}
\label{thm:contest-from-read-write}
There is a wait-free (decisively) linearizable contest implementation from
reads and writes, for $n \geq 2$ processes.
\end{restatable}

\begin{proof}
Consider the following read/write wait-free implementation.
The processes share a variable~$X$, initialized to $false$.
Every competitor $p_i$ reads $X$ and if the read value is $false$,
it writes $i$ in $X$ and returns $true$, otherwise it returns $true$ without writing to $X$.
The referee returns the value it reads from $X$.

We argue that the implementation is decisively linearizable.
Consider any finite execution $\alpha$ of the algorithm.
Pending operations of $\alpha$ are treated as follows.
For any pending $compete()$ operation in $\alpha$, it is completed if it writes to $X$,
otherwise it is removed.
If $\alpha$ has a pending $decide()$ operation, this operation is removed.
For simplicity, let $\alpha$ itself denote the modified execution.
Reading $\alpha$ from beginning to end, operations are linearized as follows:

\begin{itemize}
\item if the current step is a write to $X$ from a $compete()$ operation,
there are two sub-cases: if there is no $decide()$ in the linearization so far, the $compete()$ operation
is linearized at the first position of the linearization, otherwise
it is appended to the linearization.

\item if the current step is a read of $X$ returning a process index from a $compete()$ operation, then this is appended to the linearization (this $compete()$ operation will not write to $X$)

\item if the current step is a read of $X$ from a $decide()$ operation, then this is appended to the linearization,
\end{itemize}

Observe that all $compete()$ operations that write to $X$ are concurrent.
Thus,  the linearization just defined respects the real-time order of operations in $\alpha$.
The linearization is valid for the contest object because,
as operations are linearized reading $\alpha$ from beginning to end,
at all times $X$ contains the index of the competitor whose $compete()$
appears first in the linearization, or $false$ if there is no such operation,
which makes the response of $decide()$ correct.

This is decisively linearizable because the order between operations 
already linearized in $\alpha$ does not change as $\alpha$ is extended with more steps.
\end{proof}

A strongly linearizable contest object can be implemented from $(n-1)$-window registers.

\begin{restatable}{theorem}{contestfromwindow}
\label{thm:contest-from-window}
There is a wait-free strongly linearizable contest implementation from
$(n-1)$-window registers, for $n \geq 2$ processes.
\end{restatable}

\begin{proof}
Consider an implementation where the processes share an $(n-1)$-window register $X$, initialized to $\bot$,
and when a competitor $p_i$ invokes $compete()$, writes $i$ in $X$,
and when the referee invokes $decide()$, 
it reads $X$ and returns the index of the first process in the read value,
or $false$ if the read value is $\bot$.

The implementation is obviously wait-free.
To prove strong linearizability, in any execution $decide()$ operations are linearized
at the step where the referee reads $X$, and a $compete()$ operation is linearized at the
step where the competitor writes in $X$.
Strong linearizability follows from the observation that there are $n-1$ competitors
and every competitor writes at most once in $X$,
and the assumption that $X$ is atomic.
\end{proof}

In contrast, $(n-2)$-window registers do not suffice to
implement a strongly linearizable contest object,
even together with interfering primitives.

\begin{theorem}
\label{thm:no contest from window}
There is no wait-free strongly linearizable contest implementation
from interfering primitives and $(n-2)$-window registers and atomic stacks,
for {$n \geq 4$ processes}.
\end{theorem}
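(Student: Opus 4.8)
The plan is to adapt the valency argument from the motivating queue impossibility (Section~3), replacing the binary ``which value is dequeued'' valency with the more general ``which competitor index the referee returns.'' Fix a hypothetical wait-free strongly linearizable implementation $A$ and restrict attention to failure-free executions in which each competitor $p_1,\ldots,p_{n-1}$ invokes $compete()$ exactly once and the referee $p_0$ invokes $decide()$. Call a reachable configuration $d$-\emph{univalent}, for $d \in \{1,\ldots,n-1\}$, if $decide()$ cannot return any index other than $d$ in extensions in which it returns an index, and \emph{multivalent} if two distinct indices remain possible. Since from the initial configuration $C_0$ any single competitor can run alone to completion before $p_0$ even starts, becoming the unique $compete()$, linearizability forces $decide()$ to return that competitor's index; as this works for each competitor, $C_0$ is multivalent.

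The role played by the completed-enqueue claim in the queue proof is played here by: \emph{any reachable configuration in which some $compete()$ has completed is univalent}. This is the only place strong linearizability is used, and the argument mirrors the queue claim: fix the prefix-closed linearization $L$; a completed $compete()$ must appear in $L(\alpha)$, so $L(\alpha)$ has a well-defined first $compete()$ operation, say with index $v$, and prefix-closedness guarantees that this operation stays the first $compete()$ in every extension, so $decide()$ cannot return an index different from $v$. Consequently, starting from $C_0$ and letting competitors take steps while the configuration remains multivalent, the walk cannot continue forever --- by lock-freedom (implied by wait-freedom) some $compete()$ eventually completes, producing a univalent configuration --- so we reach a \emph{critical} configuration $C$: $C$ is multivalent, every competitor step out of $C$ yields a univalent configuration, and there are two competitors $p_i,p_j$ with $p_i(C)$ being $i$-univalent and $p_j(C)$ being $j$-univalent for distinct $i,j$.

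Next I would run the case analysis on the base objects $R_i,R_j$ and primitives $f_i,f_j$ that $p_i,p_j$ are poised to apply. The first three cases are exactly those of the queue proof and do not use the window semantics: if $R_i\neq R_j$, or $R_i=R_j$ with $f_i,f_j$ commuting, then $p_jp_i(C)$ and $p_ip_j(C)$ are indistinguishable to the referee $p_0$ while carrying valencies $i$ and $j$, a contradiction; and if $R_i=R_j$ with, say, $f_i$ overwriting $f_j$, then $p_i(C)$ and $p_i(p_j(C))$ are indistinguishable to $p_0$ yet have valencies $i$ and $j$. (A committing step that is a read reduces to the commuting/indistinguishability case, since it leaves the shared state unchanged.) This disposes of every interfering primitive and, when $R_i=R_j$ is a $1$-window (ordinary) register, reproduces the $n=3$ queue-style contradiction.

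The remaining, genuinely new case --- which I expect to be the main obstacle --- is $R_i=R_j=R$ a $(n-2)$-window register to which both $p_i$ and $p_j$ \emph{append}. Window appends neither commute nor overwrite, so a fresh idea is needed, and this is precisely where the gap between $n-2$ and the $n-1$ competitors of the positive result (Theorem~\ref{thm:contest-from-window}) is exploited. The plan is a \emph{covering} argument: the committing append of $p_i$ can be made invisible to $p_0$ by having the remaining competitors --- all competitors except $p_i$, which is exactly $n-2$ of them --- each append to $R$ after $p_i$, so that $p_i$'s value is pushed out of the size-$(n-2)$ window. One then compares the execution in which $p_i$ appends first and is covered against the execution in which the same $n-2$ competitors perform those appends without $p_i$ having committed; if these can be scheduled to be indistinguishable to $p_0$, they carry valency $i$ and a valency different from $i$, the desired contradiction. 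The delicate point to nail down is that the covering appends can be arranged so that both executions leave $R$, every other base object, and $p_0$'s local state in the same state --- in particular controlling for the fact that a coverer's own steps may read $R$ and branch on whether $p_i$ has already appended; this is where the precise counting $w=n-2=(\text{number of competitors})-1$ is essential and where the most care is required.
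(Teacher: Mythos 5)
Your setup (the valency definitions, bivalence of the initial configuration, the strong-linearizability claim that a completed $compete()$ forces univalence, and the walk to a critical configuration) matches the paper's proof, and your covering idea for the window-register case is also the idea the paper uses. But there is a genuine gap exactly at the point you flag as ``delicate'': you never establish that the $n-2$ intended coverers can be scheduled so that the two executions are indistinguishable to $p_0$. As you yourself note, a coverer might be poised to read $R$, or to access a different base object, and then its behavior after $p_i$'s append can depend on whether $p_i$ appended, destroying indistinguishability; the counting $w=n-2$ alone does not resolve this. The paper closes this hole with a short but essential observation missing from your proposal: at the critical configuration, \emph{every} competitor $p_k$ (not only $p_i$ and $p_j$) must itself be poised to \emph{write} to the very same register $R$. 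This follows from the valency structure: since $v_i \neq v_j$, the valency $v_k$ of $p_k(C)$ differs from $v_i$ or from $v_j$, so the pairwise cases you already handled (distinct objects; a read versus anything) apply to the pair $(p_k,p_i)$ or $(p_k,p_j)$ and yield a contradiction unless $R_k=R$ and $f_k$ is a write. With that in hand the covering is sound: in both executions each coverer takes exactly one step, a write of a value determined solely by its local state in $C$, so no coverer can branch on $p_i$'s append, and $p_0$ cannot distinguish $p_{n-1}\cdots p_2 p_1(C)$ from $p_{n-1}\cdots p_2(C)$, the desired contradiction.

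Two smaller issues. First, you drop the paper's restriction that $p_0$ invokes $decide()$ only once at least one $compete()$ has completed. That restriction guarantees that every valency-witnessing extension of a configuration in which no $compete()$ is complete begins with a competitor step; this is what lets one conclude that at the critical configuration two competitor steps lead to \emph{distinct} valencies. Without it, a witnessing extension could begin with steps of $p_0$, and your assertion that suitable $p_i,p_j$ exist is unsupported. Second, the valency of $p_i(C)$ need not equal $i$ as you write; what the argument gives (and all that is needed) is that the valencies of two of these successor configurations differ.
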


\begin{proof}
By contradiction, suppose that there is such an implementation $A$.
We will reason about all \emph{failure-free} executions where each competitor
process $p_1, \hdots, p_{n-1}$ executes its $compete()$ operation,
and the referee $p_0$ starts its $decide()$ operation
once \emph{at least} one $compete()$ operation is completed.

\begin{restatable}{claim}{claimone}
$p_0$ eventually completes its operation, obtaining a non-$false$ value.
\end{restatable}

\begin{proof}
Since $p_1,\ldots,p_{n-1}$ are correct and $A$ is wait-free,
eventually their $compete()$ operations complete.
Thus, eventually the operation of $p_0$ starts and completes,
and due to linearizability of $A$,
the operation of $p_0$ returns the index of a process.
\end{proof}

A configuration $C$ is \emph{multivalent} if there are two distinct values
$v, v' \in \{1,\ldots,n-1\}$,
such that there are two extensions from $C$ in which
$p_0$ returns $v$ and $p_0$ returns $v'$, respectively.
A configuration $C$ is~$v$-\emph{univalent}, $v \in \{1,\ldots,n-1\}$,
if in all extensions from $C$
where $p_0$ returns a value, the value is~$v$.
Recall that in both definitions we focus on executions where the referee starts once at least one competitor completes.

\begin{restatable}{claim}{claimtwo}
The initial configuration $C_0$ is multivalent.
\end{restatable}

\begin{proof}
For $i \in \{1,\ldots,n-1\}$, consider the following execution $\alpha_i$:
process $p_i$ executes until completion $compete()$,
and then $p_0$ executes $decide()$ until completion.
Execution $\alpha_i$ exists as $A$ is wait-free,
and the $decide()$ operation of $p_0$ returns $i$,
due to linearizability of $A$.
\end{proof}

The next claim follows from the definitions
and the fact that the initial configuration is multivalent.

\begin{restatable}{claim}{claimthree}
A reachable configuration is either $v$-univalent,
for some $v \in \{1, \ldots, n-1\}$, or multivalent;
furthermore, the properties are mutually exclusive.
\end{restatable}

It is obvious that if $p_0$ completes its $decide()$ operation in a configuration $C$,
then $C$ is univalent.
Interestingly, with strong linearizability the same holds also when
one of the competitors complete its $compete()$ operation.
The proof of the next claim is similar to the proof of Claim~\ref{claim-queue}
above.

\begin{restatable}{claim}{claimfour}
Any reachable configuration where some $compete()$ operation completes
is univalent.
\end{restatable}

\begin{proof}
Consider any prefix-closed linearization function $L$ of $A$,
which is guaranteed to exist due to strong linearizability of $A$.
Let $C$ be any reachable configuration with at least one completed
$compete()$ operation, and let $\alpha$ be an execution that ends at $C$.
We observe the following cases:
\begin{itemize}

\item The $decide()$ operation of $p_0$ appears in $L(\alpha)$ returning value $v \, (\neq false)$.
Such operation of $p_0$, denoted $op$, might be pending or complete in $C$.
If $op$ is complete then clearly $v$ is the non-$false$ value
that $p_0$ returns in every extension of $C$, as $op$ is already complete.
Thus, $C$ is $v$-univalent.
If $op$ is pending, then consider an extension $C'$ of $C$,
and let $\alpha'$ be an execution that extends $\alpha$
and ends at $C'$.
Since $L$ is prefix-closed, $L(\alpha)$ is prefix of $L(\alpha \cdot \alpha')$,
hence $op'$ appears in $L(\alpha \cdot \alpha')$ returning $v$.
Therefore, if $op$ is completed in $C'$,
it returns $v$. Thus, $C$ is $v$-univalent.

\item The $decide()$ operation of $p_0$ does not appear in $L(\alpha)$.
Let $v$ the process index of the first $compete()$ operation in $L(\alpha)$,
whose existence is guaranteed by the hypothesis of the lemma and since $L$ is a linearization function.
Consider any extension $C'$ of $C$ such that the $decide()$ operation of
$p_0$ returns value $u \, (\neq false)$,
and let $\alpha'$ be an execution that extends $\alpha$ and ends at $C'$.
Since $L$ is prefix-closed, $L(\alpha)$ is prefix of $L(\alpha \cdot \alpha')$,
hence $v$ is the process index of the first $compete()$ in $L(\alpha \cdot \alpha')$,
from which follows that $u=v$, due to linearizability.
Thus, $C$ is $v$-univalent. \qedhere
\end{itemize}
\end{proof}

We now prove the existence of a reachable configuration $C$
that is \emph{critical}, that is:
\begin{claim}
\label{claim:critical}
There is a reachable configuration $C$ such that
\begin{enumerate}
\item
(1)~$p_1, \ldots, p_{n-1}$ are active,
\item
(2)~$p_0$ is idle,
\item
(3)~$C$ is multivalent,
\item
(4)~the step of any $p_i$, $1 \leq i \leq n-1$, makes the next configuration $v_i$-univalent,
\item
(5)~for some processes $p_i$ and $p_j$, $v_i \neq v_j$.
\end{enumerate}
\end{claim}

\begin{proof}
We know that $C_0$ is multivalent.
Starting from $C_0$, we let processes
$p_1,\ldots,p_{n-1}$ take steps as long as
the next configuration is multivalent.
This cannot go forever since the operation of some process
eventually terminates, due to wait-freedom, and every configuration
where a $compete()$ of $p_i$ terminates is univalent.
Thus, we can reach a multivalent configuration $C$
where a step of each process $p_1, \ldots, p_{n-1}$
makes the next configuration univalent.
Let $v_i$ the value the configuration becomes $v_i$-univalent after the step of~$p_i$.
Since $C$ is multivalent,
it cannot be that $v_1 = \hdots = v_{n-1}$.
\end{proof}

We can now complete the proof of the theorem.
Consider the critical configuration $C$ 
and let $R_i$ and $R_j$ be the base objects $p_i$ and $p_j$ are
about to access in $C$, through primitives $f_i$ and $f_j$.
We complete the proof by showing a contradiction in each of the following cases,
{recalling that there are at least three competitors (since $n \geq 4$)}:
\begin{enumerate}
\item $R_i \neq R_j$. This case leads to a contradiction because configuration
$p_j p_i (C)$ is $v_i$-valent and configuration $p_i p_j (C)$ is $v_j$-valent, but
$p_j p_i (C)$ and $p_i p_j (C)$ are indistinguishable to $p_0$
{and all competitors distinct to $p_i$ and $p_j$, which implies that $p_0$ cannot output distinct values
in the extensions of $p_j p_i (C)$ and $p_i p_j (C)$ in which a competitor $p_k \neq p_i,p_j$
runs solo and completes its $compete()$ operation, and then $p_0$ starts and completes its $decide()$ operation.}

\item $R_i = R_j$ and is an interfering object:
\begin{enumerate}
\item $f_i$ and $f_j$ commute. We reach a contradiction too because
$p_j p_i (C)$ and $p_i p_j (C)$ are indistinguishable to $p_0$
{and all competitors distinct to $p_i$ and $p_j$}.

\item $f_i$ overwrites $f_j$ (without loss of generality).
Configurations $p_i(C)$ and $p_i p_j(C)$ are indistinguishable to $p_0$ {and all competitors distinct to $p_i$ and $p_j$},
which is a contradiction.
\end{enumerate}

\item $R_i = R_j$ and is a window register:
\begin{enumerate}
\item At least one of $f_i$ or $f_j$ is a read.
Without loss of generality, suppose $f_i$ is a read.
We reach a contradiction because
$p_j p_i (C)$ and $p_j (C)$ are indistinguishable to $p_0$ {and all competitors distinct to $p_i$ and $p_j$}.

\item Both $f_i$ and $f_j$ are writes.
First we make the following observation. Consider any
competitor~$p_k \neq p_i, p_j$, $1 \leq k \leq n$, and let $v_k$ the value $p_k(C)$ is $v_k$-univalent.
Let $R_k$ be the base object~$p_k$ is about to access in $C$, through primitive $f_k$.
Since $v_i \neq v_j$, $v_k$ is distinct to~$v_i$ or~$v_j$.
By cases (1) and (3.a),
if it is not true that $R_k = R_i = R_j$ and~$f_k$ is a write,
then we reach a contradiction (either through $p_i$ or $p_j$).
Thus, in $C$ we have that all processes $p_1, \hdots, p_{n-1}$ are about to write to
the same base object~$R$.
Without loss of generality, let us assume that $p_i = p_1$ and $p_j = p_2$.
Since $R$ is an $(n-2)$-window register,
process $p_0$ {and all competitors distinct to $p_1$ and $p_2$}
cannot distinguish between configurations $p_{n-1} \hdots p_3 p_2 p_1(C)$
and $p_{n-1} \hdots p_3 p_2(C)$, which is a contradiction.
\end{enumerate}

\item $R_i = R_j$ and is a stack.
Without loss of generality,
we assume that in every execution of $A$, every value~$v$
is input of at most one push operation.\footnote{Algorithm $A$ can be modified so that every process $p_i$
stamps input $v$ of a push operation with a pair~$(i, t)$, where $t$ is a local counter incremented after each push operation.
Such a change does not affect $A$'s correctness and progress properties.}
\begin{enumerate}
\item Both $f_i$ and $f_j$ are pop operations. We reach a contradiction in this case because
$p_j p_i (C)$ and $p_i p_j (C)$ are indistinguishable to $p_0$
and all competitors distinct to~$p_i$ and~$p_j$.

\item Both $f_i$ and $f_j$ are push operations. Let $R$ denote $R_i \, (= R_j)$, and $u_i$ and $u_j$ be the input values
of operations $f_i$ and $f_j$, respectively.
Let $p_k$ be any competitor distinct to~$p_i$ and~$p_j$.
Consider any $x \geq 1$ such that $p_k$'s $compete()$ operation is completed in $p_k^x p_j p_i (C)$.
Our analysis depends on whether $p_k$ pops $u_i$ or $u_j$ from $R$ in $p_k^x p_j p_i (C)$:
\begin{enumerate}
\item $p_k$ pops $u_i$ and $u_j$. Thus, there is $1 \leq x' < x$ such that,
 in $p_k^{x'} p_j p_i (C)$, $p_k$~is poised to perform a pop operation in $R$ and $u_i$ is at the top of $R$
 (which implies that $p_k$ already popped~$u_j$ from~$R$).
Due to the specification of~$R$,
neither $p_k$ nor $p_0$ can distinguish between $p_k^{x'+1} p_j p_i (C)$ and $p_k p_i p_k^{x'} p_j (C)$,
which is a contradiction.

\item $p_k$ pops $u_j$ but not $u_i$.
Consider any $y \geq 1$ such that $p_0$'s $decide()$ operation is completed in $p_0^y p_k^x p_j p_i (C)$.
Due to the specification of $R$, if $p_0$ does not pop $u_i$ from $R$ in configuration $p_0^y p_k^x p_j p_i (C)$,
then $p_0$ is in the same local state in that configuration and $p_0^y p_k^x p_j (C)$,
which implies that in both configurations~$p_0$'s operation return the same, but that contradicts the properties of $C$.
Thus, it must be that $p_0$ pops $u_i$ from $R$ in $p_0^y p_k^x p_j p_i (C)$,
and hence there is  $1 \leq y' < y$ such that,
 in $p_0^{y'} p_k^x p_j p_i (C)$, $p_0$ is poised to perform a pop operation in $R$ and $u_i$ is at the top of $R$.
 Again, due to the specification of $R$, $p_0$ cannot distinguish between
 $p_0^{y'+1} p_k^x p_j p_i (C)$ and $p_0 p_i p_0^{y'} p_k^x p_j (C)$, which is a contradiction.

\item $p_k$ does not pop neither $u_i$ nor $u_j$.
Consider any $y \geq 1$ such that $p_0$'s $decide()$ is completed in $p_0^y p_k^x p_j p_i (C)$.
We finalize this case by considering if $p_0$ pops $u_i$ or $u_j$ from $R$ in that configuration.
\begin{itemize}
\item $p_0$ does not pop neither $u_i$ nor $u_j$. Then, $p_0$ is in the same local state in
$p_0^y p_k^x p_j p_i (C)$ and $p_0^y p_k^x p_i p_j (C)$, which derives in a contradiction.

\item $p_0$ pops $u_j$ but not $u_i$. Then, $p_0$ is in the same local state in
$p_0^y p_k^x p_j p_i (C)$ and $p_0^y p_k^x p_j (C)$, which derives in a contradiction too.

\item $p_0$ pops $u_i$ and $u_j$. Here we can reach a contradiction using a similar argument
as in case (4.b.i) above.

\end{itemize}

\end{enumerate}

\item One of $f_i$ and $f_j$ is push and the other pop.
The argument for this case is very similar to the argument for the
previous case.
Without loss of generality, let $f_j$ be push with input $u_j$ and $f_i$ be pop. Let $R$ denote $R_i \, (= R_j)$.
Note that if $R$ is empty in configuration $C$, then neither $p_0$ nor
any competitor distinct to $p_i$ is able to distinguish between $p_j (C)$ and $p_j p_i (C)$,
which derives in a contradiction. Thus, $R$ is not empty in $C$.
Let~$p_k$ be a competitor different from~$p_i$ and~$p_j$.
Let $u_i$ be the value at the top of $R$ in~$p_i (C)$ (possibly the value denoting the stack is empty).
Consider any~$x \geq 1$ such that $p_k$'s $compete()$ operation is completed in $p_k^x p_j p_i (C)$.
We proceed as in case (4.b) above,\footnote{The analysis is exactly the same,
as one can conceive the next step $p_i$ in $C$ is a push operation with input $u_i$.} considering whether $p_k$ pops $u_i$
or $u_j$ from $R$ in $p_k^x p_j p_i (C)$:

\begin{enumerate}
\item $p_k$ pops $u_i$ and $u_j$. Thus, there is $1 \leq x' < x$ such that,
 in $p_k^{x'} p_j p_i (C)$, $p_k$~is poised to perform a pop operation in $R$ and $u_i$ is at the top of $R$
 (which implies that $p_k$ already popped~$u_j$ from~$R$).
Due to the specification of~$R$,
neither $p_k$ nor $p_0$ can distinguish between $p_k^{x'+1} p_j p_i (C)$ and $p_k p_i p_k^{x'} p_j (C)$,
which is a contradiction.

\item $p_k$ pops $u_j$ but not $u_i$.
Consider any $y \geq 1$ such that $p_0$'s $decide()$ operation is completed in $p_0^y p_k^x p_j p_i (C)$.
Due to the specification of $R$, if $p_0$ does not pop $u_i$ from $R$ in configuration $p_0^y p_k^x p_j p_i (C)$,
then $p_0$ is in the same local state in that configuration and $p_0^y p_k^x p_j (C)$,
which implies that in both configurations~$p_0$'s operation return the same, but that contradicts the properties of $C$.
Thus, it must be that $p_0$ pops $u_i$ from $R$ in $p_0^y p_k^x p_j p_i (C)$,
and hence there is  $1 \leq y' < y$ such that,
 in $p_0^{y'} p_k^x p_j p_i (C)$, $p_0$ is poised to perform a pop operation in $R$ and $u_i$ is at the top of $R$.
 Again, due to the specification of $R$, $p_0$ cannot distinguish between
 $p_0^{y'+1} p_k^x p_j p_i (C)$ and $p_0 p_i p_0^{y'} p_k^x p_j (C)$, which is a contradiction.

\item $p_k$ does not pop neither $u_i$ nor $u_j$.
Consider any $y \geq 1$ such that $p_0$'s $decide()$ is completed in $p_0^y p_k^x p_j p_i (C)$.
We finalize this case by considering if $p_0$ pops $u_i$ or $u_j$ from $R$ in that configuration.
\begin{itemize}
\item $p_0$ does not pop neither $u_i$ nor $u_j$. Then, $p_0$ is in the same local state in
$p_0^y p_k^x p_j p_i (C)$ and $p_0^y p_k^x p_i p_j (C)$, which derives in a contradiction.

\item $p_0$ pops $u_j$ but not $u_i$. Then, $p_0$ is in the same local state in
$p_0^y p_k^x p_j p_i (C)$ and $p_0^y p_k^x p_j (C)$, which derives in a contradiction too.

\item $p_0$ pops $u_i$ and $u_j$. Here we can reach a contradiction using a similar argument
as in case (4.b.i) above. \qedhere
\end{itemize}
\end{enumerate}

\end{enumerate}
\end{enumerate}
\end{proof}

\begin{restatable}{proposition}{propqueue}
A lock-free strongly linearizable queue can implement
a wait-free strongly linearizable contest object.
\end{restatable}

\begin{proof}
Let $A$ be any such implementation. Due to strong linearizability,
we can assume that $A$ is an \emph{atomic} base object.
To implement a contest object from $A$,
each competitor $p_i$ simply performs $A.enqueue(i)$,
and the referee $p_0$ performs $A.dequeue()$ and returns the value it obtains,
if distinct from empty, otherwise returns $false$.

As $A$ is lock-free and processes execute at most one operation on $A$,
the implementation is wait-free.
To prove strong linearizability, consider any finite execution $\alpha$ of the implementation.
Let us complete any pending $compete()$ operations that access $A$,
and remove any other pending operation. Let $\alpha$ denote the modified execution.
Every operation of $\alpha$ is linearized at its step that access $A$.
Since $A$ is a linearizable queue,
when $p_0$'s $decide()$ operation returns a value $v \neq false$,
there is a previous $enqueue(v)$ of a competitor, and that must be the first enqueue on $A$.
Thus, the linearization points induce a valid linearization.
The implementation is strongly linearizable because the linearization points in $\alpha$
remain fixed in any extension of it.
\end{proof}

\begin{corollary}
\label{cor:lf sl queue and stack}
There is no lock-free strongly linearizable queue
implementation from interfering primitives and $(n-2)$ window registers and atomic stacks,
for {$n \geq 4$} processes.
\end{corollary}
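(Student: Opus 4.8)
The plan is to argue by contradiction and reduce to Theorem~\ref{thm:no contest from window}. Suppose that for some $n \geq 3$ there is a lock-free strongly linearizable queue implementation $A$ from interfering primitives and $(n-2)$-window registers; the stack case is entirely analogous, using the stack proposition in place of the queue one.

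First I would invoke the preceding proposition on queues, which takes a single instance of any lock-free strongly linearizable queue and builds from it a wait-free strongly linearizable contest object $B$. That proposition already establishes both the progress and the correctness of $B$ when the queue is viewed as an atomic base object, and it uses only one copy of the queue together with purely local bookkeeping. The crux is then the composition step: to obtain an implementation from the actual primitives, I would flatten $B$ over $A$, replacing the atomic queue inside $B$ by the concrete implementation $A$. Here I rely on the fact that strong linearizability composes~\cite{GolabHW2011}, so that substituting a strongly linearizable implementation for an atomic object inside a strongly linearizable implementation again yields a strongly linearizable implementation of the outer object; progress is likewise inherited, since each process issues at most the operations prescribed by $B$ on the single copy of $A$. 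After flattening, the abstract queue disappears and the only base objects remaining are those of $A$, namely interfering primitives and $(n-2)$-window registers.

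The outcome is a wait-free strongly linearizable contest implementation from interfering primitives and $(n-2)$-window registers for $n \geq 3$, which contradicts Theorem~\ref{thm:no contest from window}. I expect the composition step to be the only one requiring genuine care: one must check that flattening $B$ over $A$ produces a single implementation whose base objects are exactly those permitted by Theorem~\ref{thm:no contest from window}, so that its hypotheses are satisfied verbatim. Everything else is a direct instantiation of the cited results.
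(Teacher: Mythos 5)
Your proposal is correct and follows essentially the same route as the paper: the corollary is obtained by composing the queue/stack propositions (lock-free strongly linearizable queue or stack $\Rightarrow$ wait-free strongly linearizable contest object) with Theorem~\ref{thm:no contest from window}, the only difference being that you spell out the flattening/composability step explicitly, whereas the paper folds it into the propositions' proofs via the remark that, by strong linearizability, the queue (or stack) implementation may be treated as an atomic base object, and argues wait-freedom there from lock-freedom of $A$ plus the bounded number of operations issued on it.
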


\section{Impossibility of Wait-Free Strongly Linearizable Implementations: The Long-Lived Contest Object}
\label{sec:wait-free}

The \emph{long-lived contest} object provides two operations, $decide()$ that can be invoked
\emph{at most once} by $p_0$, the \emph{referee}, and $compete()$ that can be invoked \emph{any number of times}
by $p_1, \hdots, p_{n-1}$, the \emph{competitors}.
While $compete()$ returns nothing, $decide()$
returns an integer~$x$ such that every competitor has executed \emph{at most} $x$ operations so far.
Below, the referee is denoted~$p$, and the competitors~$q_1, \hdots, q_{n-1}$.

We first show that there are linearizable long-lived contest implementations from read/write registers,
and strongly linearizable implementations from atomic objects such as counters.

\begin{restatable}{theorem}{llcontestfromreadwrite}
\label{thm:ll-contest-from-read-write}
There is a wait-free long-lived contest (decisively) linearizable implementation
from read/write registers,
for $n \geq 2$ processes.
\end{restatable}

\begin{proof}
Consider the following wait-free, read/write implementation.
The processes share an array $M$ with one entry per competitor (initialized to~0).
Each competitor first increments a local persistent counter (initialized to~0),
and then writes that value to its entry in $M$,
and the referee first collects all values in $M$ and returns the maximum among those values.

We argue that the implementation is decisively linearizable. Consider any finite execution $\alpha$ of the algorithm.
For any pending $compete$ operation in $\alpha$, it is completed if it writes to $M$, otherwise it is removed.
If $\alpha$ has a pending $decide$ operation, this operation is removed. For simplicity, let $\alpha$ itself denote the modified execution.
Operations are linearized as follows: each compete operation is linearized at its write step,
and, if $\alpha$ has a $decide$ operation returning $x$, the operation is linearized at any step in which the maximum value in $M$ is $x$.
To prove linearizability, observe that if a $decide$ operation returns $x$, then the maximum value in $M$ when the operation starts
is at most $x$, and the maximum value in $M$ when the operations completes is at least $x$. Thus, there is a step that lies
between the invocation and response of the operation in which the maximum in $M$ is $x$. This implies that the linearization
points defined above respect the real-time order in $\alpha$, and the induced sequential execution is valid for the long-lived contest object.
Finally, the linearization points also induce a subsequence-closed linearization function: $compete$ operations are appended to
the current linearization, and when the $decide$ operation completes, it is placed somewhere in the current linearization
where its output is justified.
\end{proof}

\begin{restatable}{lemma}{reductions}
\label{lemma:reductions}
For $n \geq 2$ processes,
a wait-free strongly linearizable long-lived contest implementation can be obtained from
a wait-free strongly linearizable counter, snapshot, max register, fetch\&increment or fetch\&add.
\end{restatable}

\begin{proof}
Consider the following wait-free long-lived-contest implementations:
\begin{itemize}
\item Counter. Each competitor executes $increment()$ once, 
and the referee returns the value it obtains in $read()$. 
The implementation is linearizable because, at all times, the counter records the sum of the
number of operations that all competitors have executed so far. Strong linearizability follows from the fact that
the counter is strongly linearizable.

\item Max register. Each competitor first increments a local persistent counter (initialized to~0),
and then writes that value using $maxWrite(\cdot)$, and the referee returns the value it gets from $maxRead()$.
Linearizability follows from the fact that, at all times, the max register object stores the maximum number of operations
that each competitor has executed so far, and strong linearizability follows from strong linearizability of the base object.

\item Snapshot. It is easy to implement a wait-free strongly linearizable counter
or max register from a
single strongly linearizable snapshot.

\item Fetch\&increment. Each competitor performs $fetch\&inc()$, 
and the referee returns the value it obtains in $fetch\&inc()$. 
Linearizability follows from the observation that the object upper bounds
the number operations the competitor have executed. Strong linearizability follows from strong linearizability of the base object.

\item Fetch\&add. Clearly, $fetch\&add(1)$ implements $fetch\&inc()$.\qedhere

\end{itemize}
\end{proof}

We now prove impossibility results for long-lived contest.
Fix a wait-free strongly linearizable long-lived contest implementation $A$ for $n$ processes.
Pick any prefix-closed linearization function $L$ of $A$.
For the time being, we do not care about the primitives
used in $A$ and the value of~$n$.

The proof follows an idea similar to that in~\cite{DenysyukW2015},
and is based on the following notions:

\begin{definition}
Let $\alpha$ be a finite execution.
\begin{itemize}

\item The \emph{competitors-valency} of $\alpha$, denoted $W(\alpha)$,
is the set with every $x$ such that there is a finite competitors-only extension $\beta$
such that $decide()$ appears in $L(\alpha \beta)$ with output $x$.

\item The execution $\alpha$ is \emph{competitors-closed}
if every competitors-only infinite extension $\gamma$ of $\alpha$
has a finite prefix $\beta$ such that $decide()$ appears in $L(\alpha \beta)$.
If $\alpha$ is not competitors-closed, then it is \emph{competitors-supervalent}.

\end{itemize}
\end{definition}

Furthermore, $decide() \in L(\alpha)$ denotes that $decide()$ appears in $L(\alpha)$.
For compactness, prefix \emph{competitors-} is dropped from the notions above.
It directly follows from the definitions:

\begin{proposition}
\label{prop:props-val}
For every execution $\alpha$:

\begin{enumerate}

\item
\label{prop:val-no-lin}
if $W(\alpha) \geq 2$, $decide() \notin L(\alpha)$;

\item
\label{prop:inclusion}
for every competitors-only extension $\beta$ of $\alpha$, $W(\alpha \beta) \subseteq W(\alpha)$;

\item
\label{prop:closed-ext}
if $\alpha$ is closed, for every competitors-only extension~$\beta$,
$\alpha \beta$ is closed;

\item
\label{prop:sup-ext}
if $\alpha$ is supervalent, there is a competitor $q$ such that
$\alpha q$ is supervalent;

\item
\label{prop:sup-non-comp}
if $\alpha$ is supervalent, $decide() \notin L(\alpha)$
(hence $decide()$ has not started or is pending in $\alpha$);

\item
\label{prop:non-start-super}
if $decide()$ is not started in $\alpha$, then $\alpha$ is supervalent;
\end{enumerate}
\end{proposition}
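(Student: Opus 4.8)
The plan is to derive all six items from three elementary facts about the linearization function $L$ together with the single structural observation that a competitors-only extension never moves the referee $p_0$. The facts are: (i) $L$ is prefix-closed, so $L(\alpha)$ is a prefix of $L(\alpha\beta)$ whenever $\beta$ is an extension of $\alpha$; (ii) in any linearization every complete operation appears, every operation that appears has been invoked, and each appearing operation carries a single fixed output; and (iii) because $decide()$ is invoked at most once and only by $p_0$, it can appear in (or be assigned an output in) a linearization only after $p_0$ has been invoked, which never occurs inside a competitors-only suffix. I would state these once and then dispatch the items.

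For item~\ref{prop:val-no-lin}, if $|W(\alpha)|\ge 2$ there are competitors-only extensions $\beta_1,\beta_2$ in which $decide()$ linearizes to distinct outputs $x_1\neq x_2$; were $decide()$ already in $L(\alpha)$ with some output $x$, then by (i)--(ii) its output is frozen, forcing $x=x_1$ and $x=x_2$, a contradiction, so $decide()\notin L(\alpha)$. Item~\ref{prop:inclusion} is immediate: if $x\in W(\alpha\beta)$ via a competitors-only suffix $\delta$, then $\beta\delta$ is a competitors-only suffix of $\alpha$ witnessing $x\in W(\alpha)$. Item~\ref{prop:non-start-super} is equally direct: if $decide()$ is not started in $\alpha$, then along the competitors-only infinite extension $\gamma$ in which one competitor keeps invoking $compete()$ forever (which exists by wait-freedom of the competitors' operations) the referee is never invoked, so by (ii)--(iii) $decide()\notin L(\alpha\beta)$ for every finite prefix $\beta$ of $\gamma$; hence $\gamma$ witnesses supervalence.

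Item~\ref{prop:sup-non-comp} splits into the displayed claim and its parenthetical. Taking the empty prefix in the definition of supervalence yields $decide()\notin L(\alpha)$ directly; the parenthetical then follows from (ii), since a complete operation always appears in the linearization, so $decide()\notin L(\alpha)$ rules out completion and leaves only ``not started or pending.'' For item~\ref{prop:sup-ext} I would take a competitors-only infinite extension $\gamma$ witnessing supervalence of $\alpha$, let $q$ be the competitor performing the first event of $\gamma$, and peel it off: the remainder $\gamma'$ is a competitors-only infinite extension of $\alpha q$, and since every finite prefix of $\gamma'$, preceded by that first step, is a finite prefix of $\gamma$, none of them linearizes $decide()$; hence $\alpha q$ is supervalent.

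The step needing the most care is item~\ref{prop:closed-ext}. Given a competitors-only infinite extension $\gamma$ of $\alpha\beta$, I would concatenate to form the competitors-only infinite extension $\beta\gamma$ of $\alpha$ and invoke closedness of $\alpha$ to obtain a finite prefix $\eta$ of $\beta\gamma$ with $decide()\in L(\alpha\eta)$. The subtlety is that $\eta$ may end inside $\beta$ rather than inside $\gamma$; here I would push the linearization point forward using (i): if $\eta$ is a prefix of $\beta$ then $decide()\in L(\alpha\beta)$ already, so the empty prefix of $\gamma$ works, and otherwise $\eta=\beta\delta$ with $\delta$ a finite prefix of $\gamma$. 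Either way $\gamma$ has the required finite prefix, so $\alpha\beta$ is closed. I expect this case analysis, together with keeping straight throughout the distinction between ``invoked,'' ``pending,'' and ``appears in $L$,'' to be the only genuinely delicate part; the remaining items are bookkeeping on top of prefix-closedness.
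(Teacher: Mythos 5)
Your proof is correct, and it follows the same route the paper intends: the paper states this proposition with no proof beyond the remark that ``it directly follows from the definitions,'' and your argument is exactly the natural elaboration of that remark, using only prefix-closedness of $L$, the basic properties of linearizations (complete operations appear, uninvoked operations do not, outputs are fixed), and the fact that $decide()$ is invoked at most once and never inside a competitors-only suffix. The case analysis you flag in item~\ref{prop:closed-ext} (the witnessing prefix $\eta$ ending inside $\beta$ versus inside $\gamma$) and the peeling argument for item~\ref{prop:sup-ext} are handled correctly.
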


Lemma~\ref{lemma:indist} is used several times, and its proof is a key place
where we use the fact that $A$ is strongly linearizable.
Then, we define the notion of supervalent executions that are helping:
intuitively, no matter what the competitors do, the next step of the referee
decides its current operation.

\begin{restatable}{lemma}{indist}
\label{lemma:indist}
Let $\alpha$ be closed execution that is indistinguishable from an execution $\alpha'$
to the referee~$p$ and a competitor~$q$. Then $W(\alpha) \cap W(\alpha') \neq \emptyset$.
\end{restatable}

\begin{proof}
Let $\gamma$ be the infinite $q$-solo extension of $\alpha$.
Due to wait-freedom, $q$ completes infinitely many operations in $\alpha \gamma$.
Since $\alpha$ is closed, there is a finite prefix $\beta$ of $\gamma$ such that
$decide() \in L(\alpha \beta)$. Let us pick any such $\beta$. As $L$ is prefix-closed,
$decide() \in L(\alpha \beta')$, for every finite prefix $\beta'$ of $\gamma$ having $\beta$ as a prefix.
Let $x$ be the output of $decide()$ in $L(\alpha \beta)$,
and let $\beta'$ be a prefix of $\gamma$ such that $q$ completes more than $x$ operations in
$\alpha \beta'$ and $\beta$ is a prefix of~$\beta'$. Let $\sigma$ be the $p$-solo extension
of $\alpha \beta'$ where $p$'s operation is completed (which must exist due to wait-freedom).
Since $L$ is prefix-closed, the output of $decide()$ in $\alpha \beta' \sigma$ is $x$.
Now observe that $\beta' \sigma$ is an extension of $\alpha'$, as $p$ and $q$ cannot distinguish between
$\alpha$ and $\alpha'$. Thus, in $\alpha' \beta' \sigma$, $q$ completes more than $x$ operations
and $decide()$ is complete with output $x$.
Finally, we note that $decide() \in L(\alpha' \beta')$ and outputs $x$, which implies that $x \in W(\alpha')$,
because if not, then in $L(\alpha' \beta')$ the competitor $q$ completes more than $x$ operations,
and $L(\alpha' \beta' \sigma)$ has $decide()$ with output $x$, which is invalid for the long-lived contest object.
\end{proof}

\begin{definition}
A supervalent execution $\alpha$ is \emph{competitors-helping} (\emph{helping} for short)
if for every competitors-only extension~$\beta$ of $\alpha$,
$\alpha \beta p$ is closed.
\end{definition}

\begin{restatable}{lemma}{helping}
\label{lemma:helping}
There are helping supervalent executions.
\end{restatable}

\begin{proof}
By Proposition~\ref{prop:props-val}(\ref{prop:non-start-super}),
$A$ has supervalent executions.
By contradiction, suppose that none of them is helping.
Then, for every such supervalent execution $\alpha$, there is a competitors-only extension $\beta$ such that
$\alpha \beta p$ is supervalent.
For the empty execution $\alpha_0$, which is supervalent by Proposition~\ref{prop:props-val}(\ref{prop:non-start-super}),
there is a competitors-only extension $\beta_0$
such that $\alpha_1 = \alpha_0 \beta_0 p$ is supervalent.
Note that $p$ takes at least one step in $\alpha_1$,
and by Proposition~\ref{prop:props-val}(\ref{prop:sup-non-comp}), $decide()$ is pending in $\alpha_1$.
For $\alpha_1$, there is a competitors-only extension $\beta_1$
such that $\alpha_2 = \alpha_1 \beta_1 p$ is supervalent.
Note that $p$ takes at least two steps in $\alpha_2$,
and by Proposition~\ref{prop:props-val}(\ref{prop:sup-non-comp}), $decide()$ is pending in $\alpha_2$.
Thus, we can construct an execution $\alpha_\infty$ in which $p$ takes infinitely many steps
and $decide()$ is pending, hence contradicting that $A$ is wait-free.
\end{proof}

\subsection{The case of read/write registers}
\label{sec:proof-read-write}

We now suppose that $A$ uses only read/write registers and $n \geq 3$.
Lemma~\ref{lemma:val-comp-rw} shows that this assumption implies an important property of valencies,
which is then used to characterize valencies of  closed and supervalent executions.

\begin{restatable}{lemma}{valcomprw}
\label{lemma:val-comp-rw}
Let $\alpha$ be an execution such that $\alpha q$ is closed, for a competitor $q$.
For every competitor $q' \neq q$, $W(\alpha q) \cap W(\alpha q') \neq \emptyset$.
\end{restatable}

\begin{proof}
We analyze the steps $q$ and $q'$ are poised to take at the end of $\alpha$,
in all cases concluding that $W(\alpha q) \cap W(\alpha q') \neq \emptyset$:
\begin{itemize}
\item $q$ or $q'$ reads. Suppose first that $q'$ is poised to read.
Then, $\alpha q$ and $\alpha q' q$ are indistinguishable to $p$ and $q$.
By Lemma~\ref{lemma:indist}, $W(\alpha q) \cap W(\alpha q' q) \neq \emptyset$.
By Proposition~\ref{prop:props-val}(\ref{prop:inclusion}),
$W(\alpha q' q) \subseteq W(\alpha q')$, from which follows $W(\alpha q) \cap W(\alpha q') \neq \emptyset$.
Now suppose that $q$ is poised to read.
Then, $\alpha q'$ and $\alpha q q'$ are indistinguishable to $p$ and $q'$.
By Proposition~\ref{prop:props-val}(\ref{prop:closed-ext}), $\alpha q q'$ is closed.
Lemma~\ref{lemma:indist} implies that $W(\alpha q') \cap W(\alpha q q') \neq \emptyset$.
Since $W(\alpha q q') \subseteq W(\alpha q)$,  by Proposition~\ref{prop:props-val}(\ref{prop:inclusion}),
we have $W(\alpha q) \cap W(\alpha q') \neq \emptyset$.

\item $q$ and $q'$ write to distinct registers.
Thus, $\alpha q q'$ and $\alpha q' q$ are indistinguishable to~$p$ and~$q$ (and~$q'$).
By Lemma~\ref{lemma:indist}, $W(\alpha q q') \cap W(\alpha q' q) \neq \emptyset$.
By Proposition~\ref{prop:props-val}(\ref{prop:closed-ext}), $\alpha q q'$ is closed.
Since $W(\alpha q) \subseteq W(\alpha q q')$ and $W(\alpha q') \subseteq W(\alpha q' q)$,
by Proposition~\ref{prop:props-val}(\ref{prop:inclusion}),
we have $W(\alpha q) \cap W(\alpha q') \neq \emptyset$.

\item $q$ and $q'$ write to the same register.
Then, $\alpha q$ and $\alpha q' q$ are indistinguishable to $p$ and $q$.
We have that $W(\alpha q) \cap W(\alpha q' q) \neq \emptyset$,
by Lemma~\ref{lemma:indist}.
By Proposition~\ref{prop:props-val}(\ref{prop:inclusion}),
$W(\alpha q' q) \subseteq W(\alpha q')$, hence $W(\alpha q) \cap W(\alpha q') \neq \emptyset$.\qedhere
\end{itemize}
\end{proof}

\begin{restatable}{lemma}{lemclosed}
\label{lemma:closed}
If $\alpha$ is closed, $|W(\alpha)| = 1$.
\end{restatable}

\begin{proof}
By contradiction, suppose that $|W(\alpha)| \geq 2$. We extend $\alpha$ as follows:
we let a competitor take its next step as long as this yields an extension $\beta$
with $|W(\alpha \beta)| \geq 2$. Observe that such extension~$\beta$ cannot be infinite,
because if so, for every finite prefix $\beta'$ of $\beta$ we have that $decide() \notin L(\alpha \beta')$,
by Proposition~\ref{prop:props-val}(\ref{prop:val-no-lin}), hence contradicting that $\alpha$ is closed.
Pick any such finite extension $\beta$. Thus, $|W(\alpha \beta)| \geq 2$, and for every competitor $q$,
$|W(\alpha \beta q)| = 1$. By Proposition~\ref{prop:props-val}(\ref{prop:closed-ext}), $\alpha \beta q$ is closed,
for every competitor $q$.
Proposition~\ref{prop:props-val}(\ref{prop:inclusion}) implies that there
are distinct competitors $q$ and $q'$ such that $W(\alpha \beta q) \cap W(\alpha \beta q') = \emptyset$.
This contradicts Lemma~\ref{lemma:val-comp-rw}.
\end{proof}

\begin{restatable}{lemma}{supervalent}
\label{lemma:supervalent}
If $\alpha$ is supervalent, $|W(\alpha)| = 0$.
\end{restatable}

\begin{proof}
Suppose  otherwise, and consider any $x \in W(\alpha)$.
We extend $\alpha$ as follows:
we let a competitor take its next step as long as this yields an extension $\beta$
such that $\alpha \beta$ is supervalent and $x \in W(\alpha \beta)$.
Such extension $\beta$ cannot be infinite,
because if so, there is a finite prefix $\beta'$ of $\beta$ such that more than
$x$ operations of a single competitor are completed in $\alpha \beta'$, due to wait-freedom,
and $decide() \notin L(\alpha \beta')$, by Proposition~\ref{prop:props-val}(\ref{prop:sup-non-comp}).
As $L$ is a linearization function, there is no extension of $\alpha \beta'$ in which
$decide()$ returns $x$, contradicting that $x \in W(\alpha \beta')$.
Pick any such finite extension $\beta$. Thus, $x \in W(\alpha \beta)$, and for any competitor $q$,
$\alpha \beta q$ is closed or $x \notin W(\alpha \beta q)$.
By Proposition~\ref{prop:props-val}(\ref{prop:sup-ext}), there is a competitor~$q'$
such that~$\alpha \beta q'$ is supervalent. Hence, $x \notin W(\alpha \beta q')$.
Proposition~\ref{prop:props-val}(\ref{prop:inclusion}) implies that there is a competitor $q \neq q'$
such that $x \in W(\alpha \beta q)$. Then, $\alpha \beta q$ is closed.
Lemma~\ref{lemma:closed}
implies that $W(\alpha \beta q') = \{x\}$,
and hence, $W(\alpha \beta q) \cap W(\alpha \beta q') = \emptyset$.
This contradicts Lemma~\ref{lemma:val-comp-rw}.
\end{proof}

\begin{restatable}{lemma}{valrefrw}
\label{lemma:val-ref-rw}
Let $\alpha$ be an execution such that $\alpha p$ is closed and,
for a competitor $q$, it is \emph{not} the case that at the end of $\alpha$,
the referee $p$ is poised to read a register and $q$ is poised to write to the same register.
Then, $W(\alpha p) \cap W(\alpha q p) \neq \emptyset$.
\end{restatable}

\begin{proof}
We analyze the steps $p$ and $q$ are poised to take at the end of $\alpha$,
in all cases concluding that $W(\alpha p) \cap W(\alpha q p) \neq \emptyset$:
\begin{itemize}
\item $q$ reads. Consider a competitor $q' \neq q$.
Then, $p$ and $q'$ cannot distinguish between $\alpha p$ and $\alpha q p$.
By Lemma~\ref{lemma:indist}, $W(\alpha p) \cap W(\alpha q p) \neq \emptyset$.

\item $q$ and $p$ apply primitives to distinct registers.
Then, $p$ and $q$ cannot distinguish between $\alpha p q$ and $\alpha q p$.
Proposition~\ref{prop:props-val}(\ref{prop:closed-ext}) implies that $\alpha p q$ is closed.
By Lemma~\ref{lemma:indist}, $W(\alpha p q) \cap W(\alpha q p) \neq \emptyset$.
Since $W(\alpha p q) \subseteq W(\alpha p)$, by Proposition~\ref{prop:props-val}(\ref{prop:inclusion}),
we have $W(\alpha p) \cap W(\alpha q p)~\neq~\emptyset$.

\item  $p$ and $q$ write to the same register.
For every competitor $q' \neq q$,
$p$ and $q'$ cannot distinguish between $\alpha p$ and $\alpha q p$.
By Lemma~\ref{lemma:indist}, $W(\alpha p) \cap W(\alpha q p) \neq \emptyset$.\qedhere

\end{itemize}

\end{proof}

By Lemma~\ref{lemma:helping}, there is a helping supervalent execution $\alpha$,
namely, for every competitors-only extension~$\beta$, $\alpha \beta p$ is closed.
For the rest of the proof, fix any such execution $\alpha$.

\begin{lemma}
\label{lemma:to-reach-contr}
There is a finite competitors-only extension $\beta$ of $\alpha$ such that, for every competitor~$q$,
\begin{enumerate}
\item
\label{prop:contr-simple}
$W(\alpha \beta p) \cap W(\alpha \beta q p) = \emptyset$, and

\item
\label{prop:contr-complex}
for every competitor $q' \neq q$, there is a $k \geq 0$ such that
$W(\alpha \beta q p q'^{k+1}) \cap W(\alpha \beta q'^k q p q') = \emptyset$ and
$W(\alpha \beta q'^{k+1} q p ) \cap W(\alpha \beta q'^k q p q') = \emptyset$.
\end{enumerate}
\end{lemma}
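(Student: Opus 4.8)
The plan is to reduce every ``$\cap=\emptyset$'' condition in the statement to an inequality between two decided values, and then to prove the two conjuncts by rather different means. First I would record that every execution named in the statement is closed: $\alpha$ is helping (Lemma~\ref{lemma:helping}), so $\alpha\gamma p$ is closed for every competitors-only extension $\gamma$ of $\alpha$, and each listed execution has this form possibly followed by further competitor steps, hence is closed by Proposition~\ref{prop:props-val}(\ref{prop:closed-ext}). By Lemma~\ref{lemma:closed} each such valency is then a singleton, so if I write $\phi(\delta)$ for the unique element of $W(\alpha\delta p)$ (for a competitors-only $\delta$), every ``$\cap=\emptyset$'' just asserts that two $\phi$-values differ. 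The key organizational point is that property~(1) is existential in $\beta$ whereas property~(2) will hold for \emph{every} competitors-only $\beta$, so it suffices to produce one $\beta$ witnessing property~(1).

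For property~(2), fix any competitors-only $\beta$ and a pair $q\neq q'$, and set $g(j)=\phi(\beta\,q'^j q)$, the value decided in $\alpha\beta q'^j q p$. The crucial collapse --- the one place where strong linearizability (prefix-closedness of $L$) is genuinely used --- is that competitor steps appended \emph{after} $p$'s committing step cannot change the already-fixed decision: since $W(\alpha\beta q p)=\{g(0)\}$ and $W(\alpha\beta q'^k q p)=\{g(k)\}$, Proposition~\ref{prop:props-val}(\ref{prop:inclusion}) together with closedness forces $W(\alpha\beta q p q'^{k+1})=\{g(0)\}$ and $W(\alpha\beta q'^k q p q')=\{g(k)\}$, while $W(\alpha\beta q'^{k+1} q p)=\{g(k+1)\}$ by definition. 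Thus the two requirements of property~(2) become $g(0)\neq g(k)$ and $g(k)\neq g(k+1)$. Wait-freedom makes $g$ unbounded (as $j\to\infty$, $q'$ completes arbitrarily many operations before $p$ decides, and the long-lived contest specification forces the returned value to be at least that count), so $g$ has infinitely many jumps $g(k)\neq g(k+1)$ but equals $g(0)$ only finitely often; any jump point $k$ with $g(k)\neq g(0)$ settles both conditions at once. Nothing here depended on $\beta$, so property~(2) holds universally.

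It remains to find a single $\beta$ with $\phi(\beta q)\neq\phi(\beta)$ for all competitors $q$. I would argue by contradiction with an extremal construction: if no such $\beta$ existed, then for each competitors-only $\beta$ some competitor $q_\beta$ would satisfy $\phi(\beta q_\beta)=\phi(\beta)$; starting from the empty extension and repeatedly appending such a step builds an infinite competitors-only execution along which $\phi$ stays equal to $\phi(\epsilon)$. But some competitor completes unboundedly many operations along an infinite execution, all of them preceding any $decide()$ since the referee never moves, and validity then forces the $\phi$-values of the prefixes to grow without bound --- contradicting constancy. Hence a $\beta$ with property~(1) exists, and by the previous paragraph it also satisfies property~(2). (By Claim~\ref{claim:val-ref-rw}, any such $\beta$ in fact leaves $p$ poised to read a single register $R$ and every competitor poised to write to $R$.)

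The main obstacle is getting the collapse of property~(2) right: one must see that the three reordered executions differ only in where $p$'s decision-committing step sits among the $k+1$ steps of $q'$, and that prefix-closedness makes every $q'$-step falling after $p$ irrelevant to the decision, so that the whole question reduces to the jumps of the single unbounded integer function $g$. The only other delicate ingredient, used both here and in the extremal argument for property~(1), is the quantitative appeal to the specification: a decided value is at least the number of $q'$-operations completed before $decide()$ begins.
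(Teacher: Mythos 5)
Your proposal follows essentially the same route as the paper's proof: property (1) is obtained by the same extremal construction (extend $\alpha$ by competitor steps as long as the post-$p$ valency stays equal to the initial value $x$; wait-freedom plus the contest specification force termination, and the terminal $\beta$ is the witness), and property (2) by the same collapse-and-jump analysis (the three valencies reduce to the singletons $\{g(0)\}$, $\{g(k)\}$, $\{g(k+1)\}$ via Proposition~\ref{prop:props-val}(\ref{prop:inclusion}), Proposition~\ref{prop:props-val}(\ref{prop:closed-ext}) and Lemma~\ref{lemma:closed}, and unboundedness of $g$ produces a jump point $k$ with $g(k)\neq g(0)$ and $g(k)\neq g(k+1)$). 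Your observation that property (2) holds for \emph{every} competitors-only $\beta$, so that only property (1) constrains the choice of $\beta$, is a nice clarification; the paper proves (2) for its particular $\beta$ but indeed never uses the property that singles $\beta$ out.

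One justification is incorrect as stated, although the fact it supports is true and the repair is available in your own setup. In both halves of your argument you claim that the competitor operations completed in a competitors-only extension of $\alpha$ must precede $decide()$ in any linearization ``since the referee never moves.'' That reason is insufficient: linearizations may include \emph{pending} operations, and $decide()$ may be pending in the helping execution $\alpha$ (Lemma~\ref{lemma:helping} does not guarantee otherwise), so $L(\alpha\beta)$ could a priori contain $decide()$ even though $p$ takes no step in $\beta$. What actually excludes this is supervalence: since $\alpha$ is supervalent, $W(\alpha)=\emptyset$ by Lemma~\ref{lemma:supervalent}, and then by the very definition of competitors-valency $decide()\notin L(\alpha\beta)$ for every competitors-only extension $\beta$. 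This is precisely the step the paper makes explicit, and your proof needs it in two places: to get $g(j)\geq$ (number of operations $q'$ completes in $\alpha\beta q'^{j}q$), which drives the unboundedness of $g$, and to derive the contradiction in the extremal argument for property (1). With that substitution your proof is correct.
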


\begin{proof}
Let $\beta$ be any competitors-only extension of $\alpha$.
By Lemma~\ref{lemma:closed}, $|W(\alpha \beta p)| = 1$.
By Lemma~\ref{lemma:supervalent}, $W(\alpha)~=~\emptyset$, and
hence, by the definition of valency,
$decide() \notin L(\alpha \beta)$. Let $W(\alpha p) = \{x\}$.
We extend $\alpha$ as follows:
we let a competitor take its next step as long as this yields a competitors-only extension $\beta$
such that $W(\alpha \beta p) = \{x\}$.
Such extension $\beta$ cannot be infinite,
because if so, there is a finite prefix $\beta'$ of $\beta$ such that
$W(\alpha \beta' p) = \{x\}$, $decide() \notin L(\alpha \beta')$
and a competitor completes more than $x$ operations in $\alpha \beta'$;
and as $L$ is a linearization function, there is no extension of $\alpha \beta' p$ in which
$decide()$ returns $x$, contradicting that $W(\alpha \beta' p) = \{x\}$.
Pick any such finite extension $\beta$.
Thus, we have $W(\alpha \beta p) = \{x\}$,
and for every competitor $q$, $W(\alpha \beta q p) \neq \{x\}$.
By Lemma~\ref{lemma:closed}, $|W(\alpha \beta q p)| = 1$, hence $W(\alpha \beta p) \cap W(\alpha \beta q p) = \emptyset$, showing (1).

To prove (2), let $W(\alpha \beta q p) = \{y\}$.
Let $\ell \geq 0$. Proposition~\ref{prop:props-val}(\ref{prop:closed-ext}) implies that
$\alpha \beta q p q'^{\ell+1}$ is closed,
and Proposition~\ref{prop:props-val}(\ref{prop:inclusion}) and Lemma~\ref{lemma:closed}
that $W(\alpha \beta q p q'^{\ell+1}) = \{y\}$.
By Lemma~\ref{lemma:closed}, $|W(\alpha \beta q'^\ell q p)|~=~1$. Let $W(\alpha \beta q'^\ell q p) = \{z\}$.
It was observed too that $decide() \notin L(\alpha \beta q'^\ell q)$.
Due to wait-freedom, there are values of $\ell$ such that $q'$ completes more than $y$ operations in $\alpha \beta q'^\ell q$.
Pick any such $\ell$, and consider any $k \geq \ell$.
Due to the specification of long-lived contest and
since $L$ is a linearization function, there is no extension of $\alpha \beta q'^k q$ in which $decide()$ outputs $y$.
Thus, $y \neq z$.
By Proposition~\ref{prop:props-val}(\ref{prop:inclusion}),
$W(\alpha \beta q'^k q p q') = W(\alpha \beta q'^k q p)$,
and hence
$$W(\alpha \beta q p q'^{k+1}) \cap W(\alpha \beta q'^k q p q') = \emptyset,$$
for any $k \geq \ell$.
Using similar reasoning, we can argue that it is not true that
$$W(\alpha \beta q'^k q p) = W(\alpha \beta q'^{k+1} q p),$$
for every $k \geq \ell$.
Thus, there is a $k \geq \ell$ such that $W(\alpha \beta q'^k q p) \neq W(\alpha \beta q'^{k+1} q p)$,
and moreover $W(\alpha \beta q'^k q p) \cap W(\alpha \beta q'^{k+1} q p) = \emptyset$.
Finally, Proposition~\ref{prop:props-val}(\ref{prop:inclusion}) and Lemma~\ref{lemma:closed} imply that
$W(\alpha \beta q'^k q p q') = W(\alpha \beta q'^k q p)$.
\end{proof}

To reach a contradiction, consider a competitors-only 
extension $\beta$ of $\alpha$ as stated in Lemma~\ref{lemma:to-reach-contr}.
We have that $\alpha \beta p$ is closed.
Otherwise, if at the end of $\alpha \beta$
all competitors are poised to write to a register $R$ and the referee is poised to read~$R$,
then by Lemma~\ref{lemma:val-ref-rw}, 
$W(\alpha \beta q p q'^{k+1}) \cap W(\alpha \beta q'^k q p q') = \emptyset$,
contradicting Lemma~\ref{lemma:to-reach-contr}(\ref{prop:contr-simple}).
Thus, it must be that
all competitors are poised to write to a register $R$ and the referee is poised to read~$R$, at the end of $\alpha \beta$.
We use the properties of $\beta$ stated in 
Lemma~\ref{lemma:to-reach-contr}(\ref{prop:contr-complex}) 
to reach a contradiction.

First, we use $W(\alpha \beta q p q'^{k+1}) \cap W(\alpha \beta q'^k q p q') = \emptyset$.
Observe that, at the end of $\alpha \beta q p q'^{k+1}$ and $\alpha \beta q'^k q p q'$,
$q$ is in the same state as it only writes to $R$ after $\alpha \beta$,
and $p$ is in the same state too as it reads from $R$ what $q$ writes.
As after $\alpha \beta$, $q'$ is poised to write to $R$,
its state and the state of the shared memory is the same at the end of $\alpha \beta q p q'^k$ and $\alpha \beta q'^k$.
For the time being, suppose that $q'$ is poised to write to $R$ at the end of $\alpha \beta q'^k$;
this implies that  $q'$ is poised to write to $R$ at the end of $\alpha \beta q p q'^k$ too.
Then, the state of the shared memory is the same at the end of the two executions (and the state of $q'$ too),
and thus $p$ and $q$ cannot distinguish between the executions.
Since $\alpha \beta q p q'^{k+1}$ is closed,
this contradicts Lemma~\ref{lemma:indist}.

We complete the proof
by arguing that it is actually the case that $q'$ is poised
to write to $R$ at the end of $\alpha \beta q'^k$.
We now use that $W(\alpha \beta q'^{k+1} q p ) \cap W(\alpha \beta q'^k q p q') = \emptyset$.
Using a similar reasoning as in the paragraph above, we can see that
if $q'$ is poised to read $R$ or access a register $R' \neq R$, then
$p$ and $q$ cannot distinguish between $\alpha \beta q'^{k+1} q p$ and  $\alpha \beta q'^k q p q'$,
which contradicts Lemma~\ref{lemma:indist}.
Therefore we have:

\begin{theorem}
\label{thm:no ll contest from read write}
There is no wait-free strongly linearizable long-lived contest implementation
from only read and write primitives, for $n \geq 3$ processes.
\end{theorem}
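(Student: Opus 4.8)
The plan is to finish the valency argument that the preceding lemmas have set up, deriving a contradiction with wait-freedom and strong linearizability from a standard indistinguishability (``covering'') configuration, in the spirit of~\cite{DenysyukW2015} but dispensing with any readability assumption. First I would fix, via Lemma~\ref{lemma:helping}, a helping supervalent execution $\alpha$: for every competitors-only extension $\beta$, the execution $\alpha\beta p$ is \emph{closed}, hence by Lemma~\ref{lemma:closed} has a singleton valency. Then I would invoke Lemma~\ref{lemma:to-reach-contr} to obtain a specific competitors-only extension $\beta$ of $\alpha$ satisfying both (i) $W(\alpha\beta p)\cap W(\alpha\beta q p)=\emptyset$ for every competitor $q$, and (ii) for every pair of distinct competitors $q,q'$ a witness $k\ge 0$ for the two emptiness conditions on the valencies of the $q'^{k}$-padded schedules.

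The first load-bearing step is to pin down, at the end of $\alpha\beta$, the step each process is poised to take. Property (i) says a single step of any competitor $q$ changes the value to which the referee is committed. By the contrapositive of Claim~\ref{claim:val-ref-rw} (whose hypothesis holds since $\alpha\beta p$ is closed), this can happen only if $p$ is poised to \emph{read} some register and $q$ is poised to \emph{write} exactly that register. Since $p$'s step is determined by its local state at the end of $\alpha\beta$, $p$ reads one fixed register $R$, and therefore every one of the $n-1$ competitors must be poised to write that same $R$. This yields the covering configuration in which all competitors write $R$ while $p$ reads $R$.

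The final step, and the main obstacle, is to turn property (ii) into a contradiction with Lemma~\ref{lemma:indist}; here $n\ge 3$ is used to guarantee a second competitor $q'$. Using the emptiness of $W(\alpha\beta q p q'^{k+1})\cap W(\alpha\beta q'^{k} q p q')$, I would compare the two $q'^{k}$-padded executions and argue they are indistinguishable to $p$ and $q$: $q$ performs a single write to $R$ so its local state matches; $p$ reads precisely the value $q$ wrote so its local state matches; and, crucially, the \emph{overwrite} semantics of register writes makes $q'$'s first write to $R$ erase the discrepancy between the shared-memory contents left by $q$'s write in the two runs, so $q'$'s remaining steps evolve identically and the two executions reach the same configuration. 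As $\alpha\beta q p q'^{k+1}$ is closed (a competitors-only extension of the closed $\alpha\beta q p$, by Proposition~\ref{prop:props-val}(\ref{prop:closed-ext})), Lemma~\ref{lemma:indist} would force the two valency sets to intersect, contradicting (ii).

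The delicate point I expect to wrestle with is that the overwrite argument presupposes $q'$ is poised to write $R$ at the end of $\alpha\beta q'^{k}$. If instead $q'$ were poised to read $R$ or to access a different register $R'\ne R$, then its single padding step neither overwrites $R$ nor changes what $p$ reads, so the \emph{second} emptiness condition $W(\alpha\beta q'^{k+1} q p)\cap W(\alpha\beta q'^{k} q p q')=\emptyset$ already produces two executions indistinguishable to $p$ and $q$, again contradicting Lemma~\ref{lemma:indist}. Thus every shape of $q'$'s poised step is ruled out, establishing that no such wait-free strongly linearizable long-lived contest implementation from reads and writes can exist for $n\ge 3$.
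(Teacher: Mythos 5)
Your proof is correct and matches the paper's own argument essentially step for step: fixing a helping supervalent execution via Lemma~\ref{lemma:helping}, extracting $\beta$ from Lemma~\ref{lemma:to-reach-contr}, deducing from Claim~\ref{claim:val-ref-rw} and condition (i) that all competitors are poised to write the very register $R$ that the referee is poised to read, and then reaching the contradiction with Lemma~\ref{lemma:indist} through the overwrite/indistinguishability comparison of $\alpha\beta q p q'^{k+1}$ and $\alpha\beta q'^{k} q p q'$. Your handling of the delicate point---using the second emptiness condition of Lemma~\ref{lemma:to-reach-contr}(\ref{prop:contr-complex}) to rule out $q'$ being poised to read $R$ or to access a different register---is exactly the paper's closing argument as well.
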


Lemma~\ref{lemma:reductions} and the previous theorem imply:

\begin{corollary}
\label{cor:no wf sl counter from rw}
If $n \geq 3$, there is no wait-free strongly linearizable implementation of snapshot, max register, counter,
fetch\&increment or fetch\&add that uses only read/write registers.
\end{corollary}

\subsection{The case of window registers and test\&set}

Suppose that $A$ uses $w$-window registers and test\&set objects.
The next extends the arguments in Section~\ref{sec:proof-read-write},
with a slightly different structure.
Lemmas~\ref{lemma:closed-wind-ts} and~\ref{lemma:supervalent-wind-ts}
show that competitors valency has properties similar to
those in Lemmas~\ref{lemma:closed} and~\ref{lemma:supervalent} above.

\begin{restatable}{lemma}{closedwindts}
\label{lemma:closed-wind-ts}
Suppose that $n \geq 4$ and $w \leq n-2$.
If $\alpha$ is closed, then $|W(\alpha)| = 1$.
\end{restatable}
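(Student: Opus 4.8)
The plan is to establish the window-register analogue of Lemma~\ref{lemma:closed} by adapting the read/write argument in two stages: first isolate the ``geometry'' of the critical configuration (what primitives competitors are poised to apply), then derive a contradiction using the fact that a $w$-window register with $w \leq n-2$ cannot absorb the overwrites of all competing writers. As in Lemma~\ref{lemma:closed}, I would argue by contradiction: assume $|W(\alpha)| \geq 2$, and let competitors take steps as long as the competitors-valency stays of size at least $2$. This extension must be finite, since on an infinite extension $decide() \notin L(\cdot)$ at every finite prefix (by Proposition~\ref{prop:props-val}(\ref{prop:val-no-lin})), contradicting closedness via Proposition~\ref{prop:props-val}(\ref{prop:closed-ext}). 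So I reach a finite $\beta$ with $|W(\alpha\beta)| \geq 2$ but $|W(\alpha\beta q)| = 1$ for every competitor $q$, with each $\alpha\beta q$ closed.

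The heart of the matter is a window-register version of Lemma~\ref{lemma:val-comp-rw}: for competitors $q \neq q'$, $W(\alpha\beta q) \cap W(\alpha\beta q') \neq \emptyset$. First I would dispatch the easy poised-step cases exactly as in the read/write proof. If $q$ or $q'$ is poised to read a register, or to apply a test\&set, the commute/indistinguishability arguments of Lemma~\ref{lemma:val-comp-rw} carry over essentially verbatim (for test\&set, the process that loses the test\&set reads the same value whether or not the winner moved first, and the winner overwrites, giving the needed indistinguishability to $p$ and one competitor). If $q$ and $q'$ access distinct base objects, the commuting-writes case applies. The only genuinely new situation is when \emph{all} competitors poised at the end of $\alpha\beta$ are about to write to the \emph{same} $w$-window register $R$. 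Here I would mimic the final step of Theorem~\ref{thm:no contest from window}: since $|W(\alpha\beta q)|=1$ for each $q$ and these singletons cannot all coincide (else $W(\alpha\beta)$ would be a single value, contradicting $|W(\alpha\beta)| \geq 2$), there are two competitors whose post-step valencies are disjoint. But with $w \leq n-2$ and at least $n-1$ competitors all poised to write to $R$, I can push $w$ of the \emph{other} competitors' writes on top, so that their writes flush out the distinguishing write; concretely, letting the $w$ trailing writers be distinct from $q$ and $q'$ (possible because there are $n-1 \geq w+1$ competitors), the configurations $\alpha\beta q\, q_{i_1}\cdots q_{i_w}$ and $\alpha\beta q'\, q_{i_1}\cdots q_{i_w}$ become indistinguishable to the referee $p$ and to one surviving competitor, since $R$ now records only the last $w$ writes. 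Closedness of one side (via Proposition~\ref{prop:props-val}(\ref{prop:closed-ext})) together with Lemma~\ref{lemma:indist} then forces the two valencies to intersect, contradicting disjointness.

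With this window analogue of Lemma~\ref{lemma:val-comp-rw} in hand, the proof of $|W(\alpha)|=1$ closes exactly as Lemma~\ref{lemma:closed} does: Proposition~\ref{prop:props-val}(\ref{prop:inclusion}) forces the existence of distinct competitors $q, q'$ with $W(\alpha\beta q) \cap W(\alpha\beta q') = \emptyset$, contradicting the window version of Lemma~\ref{lemma:val-comp-rw}. The hypotheses $n \geq 4$ and $w \leq n-2$ are precisely what make the same-register flushing argument work: I need at least two distinguished competitors $q, q'$ plus $w$ further competitors to overwrite $R$, which requires $n - 1 \geq w + 2$, i.e. $w \leq n - 3$; the slightly weaker $w \leq n-2$ in the statement suggests reusing $q'$ itself among the flushing writers or overlapping the two executions more cleverly, and I would reconcile the exact counting at this point.

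The main obstacle I anticipate is the same-register write case, and specifically the careful bookkeeping of \emph{which} competitors are available to act as ``flushers'' while keeping at least one competitor (and the referee) unable to distinguish the two resulting executions. Unlike the read/write setting, where a single overwriting write suffices, here the distinguishing write of $q$ versus $q'$ must be pushed entirely out of $R$'s length-$w$ window, which forces me to append $w$ additional writes and therefore to track that enough \emph{distinct} competitors remain. Getting the indistinguishability witness right — namely that some competitor, not just the referee, cannot tell the two executions apart (as Lemma~\ref{lemma:indist} requires two observers, $p$ and a $q$) — is the delicate point, and it is exactly where the counting constraint relating $w$ and $n$ is consumed.
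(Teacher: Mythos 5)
Your overall architecture is the paper's: the same finite extension $\beta$ with $|W(\alpha\beta)|\geq 2$ and $|W(\alpha\beta q)|=1$ for every competitor $q$, the same appeal to Proposition~\ref{prop:props-val}(\ref{prop:inclusion}) to extract a pair $q\neq q'$ with $W(\alpha\beta q)\cap W(\alpha\beta q')=\emptyset$, and the same case analysis on poised steps closed off by Lemma~\ref{lemma:indist}. The genuine gap is exactly where you flag it: your flushing comparison $\alpha\beta q\,q_{i_1}\cdots q_{i_w}$ versus $\alpha\beta q'\,q_{i_1}\cdots q_{i_w}$ needs $w$ flushers distinct from both $q$ and $q'$, hence $w\leq n-3$, which does not prove the stated bound $w\leq n-2$. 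Moreover, your proposed repair of ``reusing $q'$ itself among the flushing writers'' fails: in the execution that starts with $q'$'s write, the \emph{second} step of $q'$ need not be a write to $R$ at all, and in any case $q'$'s local state would differ between the two executions, so it cannot sit inside the common flushing suffix. The paper's resolution is an \emph{asymmetric} overlap, and it reuses $q$, not $q'$: let $\gamma$ be a sequence of all $n-3$ competitors other than $q,q'$ and compare $\alpha\beta q\gamma$ with $\alpha\beta q'q\gamma$. The writes applied to $R$ after $\alpha\beta$ are $q\gamma$ (length $n-2$) and $q'q\gamma$ (length $n-1$); they share the suffix $q\gamma$ of length $n-2\geq w$, so $R$'s window is identical in both, and $q$ itself is the surviving witness, since it performs the same single write in both executions. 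Closedness of $\alpha\beta q\gamma$ (Proposition~\ref{prop:props-val}(\ref{prop:closed-ext})), disjointness of the valencies via Proposition~\ref{prop:props-val}(\ref{prop:inclusion}), and Lemma~\ref{lemma:indist} then yield the contradiction for every $w\leq n-2$.

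Two smaller points. First, your case split is over the steps $q$ and $q'$ are poised to take, yet the flushing argument needs \emph{all} competitors to be poised to write to $R$; this must be derived, not built into the statement of the case. The paper derives it from the earlier cases: any third competitor $q''$ has singleton valency, so $W(\alpha\beta q'')$ is disjoint from $W(\alpha\beta \bar q)$ for at least one $\bar q\in\{q,q'\}$, and applying the read/distinct-object/test\&set cases to the pair $(\bar q, q'')$ forces $q''$ to also be poised to write to $R$. Second, in the test\&set case the comparison that works is $\alpha\beta q$ versus $\alpha\beta q'$ --- after a single step the object is set either way, so only $q$ and $q'$ can tell the executions apart, and the witness pair for Lemma~\ref{lemma:indist} is $p$ together with a third competitor $q''$; this is precisely where $n\geq 4$ is consumed. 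Your winner/loser phrasing suggests comparing executions in which both processes move, which $q$ and $q'$ themselves distinguish, so the case should be set up as the one-step comparison.
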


\begin{proof}
By contradiction, suppose that $|W(\alpha)| \geq 2$. We extend $\alpha$ as follows:
we let a competitor take its next step as long as this yields an extension $\beta$
with $|W(\alpha \beta)| \geq 2$. Observe that such extension $\beta$ cannot be infinite,
because if so, for every finite prefix $\beta'$ of $\beta$ we have that $decide() \notin L(\alpha \beta')$,
by Proposition~\ref{prop:props-val}(\ref{prop:val-no-lin}), hence contradicting that $\alpha$ is closed.
Pick any such finite extension $\beta$. Thus, $|W(\alpha \beta)| \geq 2$, and for every competitor $q$,
$|W(\alpha \beta q)| = 1$. By Proposition~\ref{prop:props-val}(\ref{prop:closed-ext}), $\alpha \beta$ is closed.
Proposition~\ref{prop:props-val}(\ref{prop:inclusion}) implies that there
are competitors $q$ and $q'$ such that $W(\alpha \beta q) \cap W(\alpha \beta q') = \emptyset$.
We complete the proof by analyzing the steps $q$ and $q'$ are poised to take at the end of
$\alpha \beta$.

\begin{itemize}

\item $q$ or $q'$ reads. Suppose, without loss of generality, that $q'$ is poised to read.
Then, $\alpha \beta q$ and $\alpha \beta q' q$ are indistinguishable to $p$ and $q$.
We have that $\alpha \beta q$ is closed, by Proposition~\ref{prop:props-val}(\ref{prop:closed-ext}),
and $W(\alpha \beta q) \cap W(\alpha \beta q' q)~=~\emptyset$, by Proposition~\ref{prop:props-val}(\ref{prop:inclusion}).
This contradicts Lemma~\ref{lemma:indist}.

\item $q$ and $q'$ access distinct base objects.
Thus, $\alpha \beta q q'$ and $\alpha \beta q' q$ are indistinguishable to $p$ and $q$ (and $q'$).
Proposition~\ref{prop:props-val}(\ref{prop:closed-ext}) implies that $\alpha \beta q q'$ is closed,
and Proposition~\ref{prop:props-val}(\ref{prop:inclusion}) gives that  $W(\alpha \beta q q') \cap W(\alpha \beta q' q)~=~\emptyset$.
This contradicts Lemma~\ref{lemma:indist}.

\item $q$ and $q'$ $T\&S()$ to the same test\&set object.
Consider any competitor $q'' \neq q,q'$ (which exists as $n \geq 4$).
Then, $\alpha \beta q$ and $\alpha \beta q'$ are indistinguishable to $p$ and $q''$.
Proposition~\ref{prop:props-val}(\ref{prop:closed-ext}) implies that $\alpha \beta q$ and $\alpha \beta q'$ are closed.
As already said, $W(\alpha \beta q) \cap W(\alpha \beta q') = \emptyset$, which contradicts Lemma~\ref{lemma:indist}.

\item $q$ and $q'$ write to the same register $R$.
Let $q''$ be any competitor distinct to $q$ and $q'$.
Since $|W(\alpha \beta q'')| = 1$,
it must be that $W(\alpha \beta \bar q) \cap W(\alpha \beta q'') = \emptyset$,
for $\bar q$ equal to $q$ or $q'$.
Thus, the analysis so far for $\bar q$ and $q''$ shows that, at the end of $\alpha \beta$,
$q''$ must be poised to write to~$R$ (otherwise we reach a contradiction).
Then, all competitors are poised to write to $R$, at the end of $\alpha \beta$.
Since $w \leq n-2 $, for any sequence $\gamma$ with all the $n-3$ competitors distinct to $q$ and $q'$,
$\alpha \beta q \gamma$ and $\alpha \beta q' q \gamma$ are indistinguishable to $p$ and $q$.
Proposition~\ref{prop:props-val}(\ref{prop:inclusion}) and Proposition~\ref{prop:props-val}(\ref{prop:closed-ext})
imply that both executions $\alpha \beta q \gamma$ and $\alpha \beta q' q \gamma$ are closed with
$W(\alpha \beta q \gamma) \cap W(\alpha \beta q' q \gamma)~=~\emptyset$.
This contradicts Lemma~\ref{lemma:indist}.\qedhere
\end{itemize}
\end{proof}

\begin{restatable}{lemma}{supervalentwindts}
\label{lemma:supervalent-wind-ts}
Suppose that $n \geq 4$ and $w \leq n-2$.
If $\alpha$ is supervalent, then $|W(\alpha)| = 0$.
\end{restatable}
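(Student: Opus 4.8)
The plan is to mirror the structure of Lemma~\ref{lemma:supervalent} from the read/write case, replacing the role of Lemma~\ref{lemma:val-comp-rw} with the case analysis just carried out in Lemma~\ref{lemma:closed-wind-ts}. Concretely, I would argue by contradiction: assume $\alpha$ is supervalent and $|W(\alpha)| \geq 1$, fix some $x \in W(\alpha)$, and then let competitors take steps as long as the resulting competitors-only extension $\beta$ keeps $\alpha\beta$ supervalent with $x \in W(\alpha\beta)$. The first key step is to argue this extension cannot be infinite: in an infinite competitors-only extension some single competitor would complete more than $x$ operations while $decide() \notin L(\alpha\beta')$ (by Proposition~\ref{prop:props-val}(\ref{prop:sup-non-comp})), so no extension could ever linearize $decide()$ with output $x$, contradicting $x \in W(\alpha\beta')$. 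This is verbatim the argument in Lemma~\ref{lemma:supervalent}.

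Having fixed a finite maximal such $\beta$, I would then observe that $x \in W(\alpha\beta)$ but for every competitor $q$, either $\alpha\beta q$ is closed or $x \notin W(\alpha\beta q)$. By Proposition~\ref{prop:props-val}(\ref{prop:sup-ext}) there is a competitor $q'$ with $\alpha\beta q'$ supervalent, so $x \notin W(\alpha\beta q')$; and by Proposition~\ref{prop:props-val}(\ref{prop:inclusion}) some competitor $q \neq q'$ still has $x \in W(\alpha\beta q)$, forcing $\alpha\beta q$ to be closed. Now I have two competitors whose valencies are disjoint, with $\alpha\beta q$ closed. The remaining task is to re-run the four-case indistinguishability analysis ($q/q'$ reads; distinct base objects; same test\&set; same register write) of Lemma~\ref{lemma:closed-wind-ts} to derive a contradiction with Lemma~\ref{lemma:indist}. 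To invoke Lemma~\ref{lemma:closed} analogously, I would first apply Lemma~\ref{lemma:closed-wind-ts} to the closed execution $\alpha\beta q$ to get $W(\alpha\beta q) = \{x\}$, hence the disjointness $W(\alpha\beta q) \cap W(\alpha\beta q') = \emptyset$ feeds directly into the case analysis.

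The main obstacle will be the final ``same register write'' case, exactly as in Lemma~\ref{lemma:closed-wind-ts}. There the argument is delicate because a single pair of writes to a $w$-window register is distinguishable to the referee; the paper's trick is to first show (via the earlier sub-cases applied to a third competitor $q''$, which exists since $n \geq 4$) that \emph{all} competitors must be poised to write to the same register $R$, and then exploit $w \leq n-2$ so that padding with the $n-3$ other competitors' writes $\gamma$ saturates the window and erases the distinguishing information, making $\alpha\beta q\gamma$ and $\alpha\beta q' q\gamma$ indistinguishable to $p$ and $q$. I would reproduce that saturation argument here, taking care that the supervalent-case valencies remain disjoint under the extension by $\gamma$ via Proposition~\ref{prop:props-val}(\ref{prop:inclusion}) and that closedness is preserved via Proposition~\ref{prop:props-val}(\ref{prop:closed-ext}). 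Since the structural scaffolding and the hardest case analysis are already established, the proof is essentially an adaptation, and I would expect to be able to abbreviate by citing Lemma~\ref{lemma:closed-wind-ts}'s case analysis rather than rewriting it in full.
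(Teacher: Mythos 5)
Your proposal is correct and follows essentially the same route as the paper's own proof: the same maximal competitors-only extension $\beta$, the same identification of a closed $\alpha\beta q$ and a supervalent $\alpha\beta q'$ with disjoint valencies (with $W(\alpha\beta q)=\{x\}$ via Lemma~\ref{lemma:closed-wind-ts}), and the same four-case indistinguishability analysis ending in the window-saturation argument with $n-3$ padding writers. The one caveat is that the case analysis of Lemma~\ref{lemma:closed-wind-ts} cannot be invoked as a black box, since there $\alpha\beta$ itself is closed while here it is supervalent: the paper therefore redoes the analysis, treating the two read sub-cases separately (the ``wlog'' breaks because only $\alpha\beta q$ is closed) and, in the same-register case, splitting on whether $x\in W(\alpha\beta q'')$ (using maximality of $\beta$ to get closedness of $\alpha\beta q''$) to obtain the disjointness needed for third competitors --- which is precisely the ``taking care'' your plan gestures at.
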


\begin{proof}
Suppose  otherwise, and consider any $x \in W(\alpha)$.
We extend $\alpha$ as follows:
we let a competitor take its next step as long as this yields an extension $\beta$
such that $\alpha \beta$ is supervalent and $x \in W(\alpha \beta)$.
Such extension $\beta$ cannot be infinite,
because if so, there is a finite prefix $\beta'$ of $\beta$ such that more than
$x$ operations of a single competitor are completed in $\alpha \beta'$, due to wait-freedom,
and $decide() \notin L(\alpha \beta')$, by Proposition~\ref{prop:props-val}(\ref{prop:sup-non-comp});
and as $L$ is a linearization function, there is no extension of $\alpha \beta'$ in which
$decide()$ returns $x$, contradicting that $x \in W(\alpha \beta')$.
Pick any such finite extension $\beta$. Thus, $x \in W(\alpha \beta)$, and for any competitor $q$,
$\alpha \beta q$ is closed or $x \notin W(\alpha \beta q)$.
By Proposition~\ref{prop:props-val}(\ref{prop:sup-ext}), there is a competitor $q'$
such that $\alpha \beta q'$ is supervalent. Hence, $x \notin W(\alpha \beta q')$.
Proposition~\ref{prop:props-val}(\ref{prop:inclusion}) implies that there is a competitor $q \neq q'$
such that $x \in W(\alpha \beta q)$. Then, $\alpha \beta q$ is closed.
Lemma~\ref{lemma:closed-wind-ts}
gives that actually $W(\alpha \beta q) = \{x\}$,
which implies that $W(\alpha \beta q) \cap W(\alpha \beta q') = \emptyset$.
We complete the proof by analyzing the steps $q$ and $q'$ are poised to take at the end of
$\alpha \beta$.

\begin{itemize}

\item $q$ or $q'$ reads. Suppose first that $q'$ is poised to read.
Then, $\alpha \beta q$ and $\alpha \beta q' q$ are indistinguishable to $p$ and $q$.
We have that $\alpha \beta q$ is closed, as already claimed,
and $W(\alpha \beta q) \cap W(\alpha \beta q' q)~=~\emptyset$, by Proposition~\ref{prop:props-val}(\ref{prop:inclusion}).
This contradicts Lemma~\ref{lemma:indist}.
Now suppose that $q$ is poised to read.
Then, $\alpha \beta q'$ and $\alpha \beta q q'$ are indistinguishable to $p$ and $q'$.
By Proposition~\ref{prop:props-val}(\ref{prop:closed-ext}), $\alpha \beta q q'$ is closed,
and by Proposition~\ref{prop:props-val}(\ref{prop:inclusion}), $W(\alpha \beta q q') \cap W(\alpha \beta q')~=~\emptyset$.
This contradicts Lemma~\ref{lemma:indist}.

\item $q$ and $q'$ access distinct base objects.
Thus, $\alpha \beta q q'$ and $\alpha \beta q' q$ are indistinguishable to $p$ and $q$ (and $q'$).
Proposition~\ref{prop:props-val}(\ref{prop:closed-ext}) implies that $\alpha \beta q q'$ is closed,
and Proposition~\ref{prop:props-val}(\ref{prop:inclusion}) gives that  $W(\alpha \beta q q') \cap W(\alpha \beta q' q)~=~\emptyset$.
This contradicts Lemma~\ref{lemma:indist}.

\item $q$ and $q'$ $T\&S()$ to the same test\&set object.
Consider any competitor $q'' \neq q,q'$ (which exists as $n \geq 4$).
Then, $\alpha \beta q$ and $\alpha \beta q'$ are indistinguishable to $p$ and $q''$.
Proposition~\ref{prop:props-val}(\ref{prop:closed-ext}) implies that $\alpha \beta q$ and $\alpha \beta q'$ are closed.
As already said, $W(\alpha \beta q) \cap W(\alpha \beta q') = \emptyset$, which contradicts Lemma~\ref{lemma:indist}.

\item $q$ and $q'$ write to the same register $R$.
Let $q''$ be any competitor distinct to $q$ and $q'$.
If $x \notin W(\alpha \beta q'')$, then the analysis above for $q$ and $q''$ shows that
$q''$ is poised to write to $R$ (otherwise we reach a contradiction).
If $x \in W(\alpha \beta q'')$, then we have have that $\alpha \beta q''$ is closed,
hence $W(\alpha \beta q'') = \{x\}$, by Lemma~\ref{lemma:closed-wind-ts},
from which follows that $W(\alpha \beta q'') \cap W(\alpha \beta q') = \emptyset$.
Again, the analysis above for $q''$ and $q'$ gives that $q''$ is poised to write to $R$.
Thus, all competitors are poised to write to $R$, at the end of $\alpha \beta$.
Since $w \leq n-2$, for any sequence $\gamma$ with all the $n-3$ competitors distinct to $q$ and $q'$,
$\alpha \beta q \gamma$ and $\alpha \beta q' q \gamma$ are indistinguishable to $p$ and $q$.
We have that $\alpha \beta q$ is closed, as already mentioned, hence $\alpha \beta q \gamma$ is closed, by Proposition~\ref{prop:props-val}(\ref{prop:closed-ext}),
and furthermore $W(\alpha \beta q \gamma) \cap W(\alpha \beta q' q \gamma)~=~\emptyset$, by Proposition~\ref{prop:props-val}(\ref{prop:inclusion}).
This contradicts Lemma~\ref{lemma:indist}.\qedhere
\end{itemize}
\end{proof}

By Lemma~\ref{lemma:helping}, there is a helping supervalent execution $\alpha$,
namely, for every competitors-only extension~$\beta$, $\alpha \beta p$ is closed.
For the rest of the proof, fix such an execution $\alpha$.
We remark that this is the only place where the restriction on the size of
the window register, $w$, is required.

\begin{restatable}{lemma}{contrwindts}
\label{lemma:to-reach-contr-wind-ts}
Suppose that $n \geq 4$ and $3w \leq n$.
There is a finite competitors-only extension $\beta$ such that, for every competitor $q$,
\begin{enumerate}
\item
\label{prop:contr-simple-wind-ts}
$W(\alpha \beta p) \cap W(\alpha \beta q p) = \emptyset$, and

\item
\label{prop:contr-complex-wind-ts}
for every competitor $q' \neq q$, there is a $k \geq 0$,
and sequences $\gamma, \gamma'_1, \gamma'_2$ of competitors
with $|\gamma| = |\gamma'_1| = |\gamma'_2| = w$, $\gamma = q \lambda$, $\gamma'_1 = q' \lambda'_1$, $\gamma'_2 = q' \lambda'_2$, and
$\lambda, \lambda'_1, \lambda'_2$ being competitor-disjoint,
such that
$W(\alpha \beta \gamma p \gamma'_1 q'^k \gamma'_2) \cap W(\alpha \beta \gamma'_1 q'^k \gamma p \gamma'_2) = \emptyset$ and
$W(\alpha \beta \gamma'_1 q'^{k+1} \gamma p ) \cap W(\alpha \beta \gamma'_1 q'^k \gamma p q') = \emptyset$.
\end{enumerate}
\end{restatable}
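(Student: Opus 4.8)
The plan is to lift the read/write argument of Lemma~\ref{lemma:to-reach-contr} to the window/test\&set setting, replacing single overwriting steps by length-$w$ competitor blocks and using Lemma~\ref{lemma:closed-wind-ts} and Lemma~\ref{lemma:supervalent-wind-ts} in place of their read/write counterparts. First I would record the two facts that drive everything. Since $\alpha$ is supervalent, Lemma~\ref{lemma:supervalent-wind-ts} gives $W(\alpha)=\emptyset$, so $W(\alpha\beta)\subseteq W(\alpha)=\emptyset$ for every competitors-only $\beta$ by Proposition~\ref{prop:props-val}(\ref{prop:inclusion}), whence $decide()\notin L(\alpha\beta)$. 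Since $\alpha$ is helping, $\alpha\beta p$ is closed for every such $\beta$, so Lemma~\ref{lemma:closed-wind-ts} makes $W(\alpha\beta p)$ a singleton; the same stays true after appending further competitor steps, by Proposition~\ref{prop:props-val}(\ref{prop:closed-ext}) and~(\ref{prop:inclusion}).

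Property~(\ref{prop:contr-simple-wind-ts}) is obtained exactly as in the read/write case. Writing $W(\alpha p)=\{x\}$, I extend $\alpha$ by competitor steps as long as $W(\alpha\beta p)=\{x\}$ is preserved. This extension must be finite: otherwise wait-freedom forces some competitor to complete more than $x$ operations at a finite prefix $\beta'$ still satisfying $W(\alpha\beta'p)=\{x\}$ and $decide()\notin L(\alpha\beta')$, so no extension of $\alpha\beta'$ can linearize $decide()$ with output $x$, contradicting $x\in W(\alpha\beta'p)$. Fixing a maximal finite $\beta$ gives $W(\alpha\beta p)=\{x\}$ while $W(\alpha\beta qp)\neq\{x\}$ for every competitor $q$; as both are singletons they are disjoint. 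This one global $\beta$ serves all pairs. For property~(\ref{prop:contr-complex-wind-ts}), fix a pair $q\neq q'$ and choose blocks $\gamma=q\lambda$, $\gamma'_1=q'\lambda'_1$, $\gamma'_2=q'\lambda'_2$ of length $w$ with $\lambda,\lambda'_1,\lambda'_2$ competitor-disjoint. Such blocks exist precisely because $3w\le n$: the three tails need $3(w-1)$ distinct competitors which, with $q$ and $q'$, total $3w-1\le n-1$; this is the single place the bound on $w$ enters.

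For the valency identities, set $y$ by $W(\alpha\beta\gamma p)=\{y\}$ and, for each $m\ge0$, $z_m$ by $W(\alpha\beta\gamma'_1 q'^{m}\gamma p)=\{z_m\}$, all of which are singletons since the executions are closed (helping plus Proposition~\ref{prop:props-val}(\ref{prop:closed-ext})). Appending the competitors-only suffix $\gamma'_2$ (resp.\ $q'$) leaves these valencies unchanged by Proposition~\ref{prop:props-val}(\ref{prop:inclusion}), so the two required disjointness conditions reduce to $z_k\neq y$ and $z_{k+1}\neq z_k$. Both follow from the operation-counting argument of the read/write proof: since $decide()\notin L(\alpha\beta\gamma'_1 q'^{m}\gamma)$ and every completed operation is linearized before $decide()$, each element of $W(\alpha\beta\gamma'_1 q'^{m}\gamma p)$ is at least the number of operations $q'$ completes in $\alpha\beta\gamma'_1 q'^{m}\gamma$, which grows without bound in $m$ by wait-freedom. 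Hence $z_m\to\infty$, so $z_k\neq y$ for all large $k$ and $z_m$ is not eventually constant; picking an index $k$ above the threshold for $z_k\neq y$ at which $z_{k+1}\neq z_k$ satisfies both conditions at once.

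The routine arithmetic of valencies transfers almost verbatim from Lemma~\ref{lemma:to-reach-contr}; the genuinely new ingredient, and the main obstacle, is the block construction. The delicate point is to verify that replacing single steps by length-$w$ blocks with disjoint tails is simultaneously schedulable with only $n-1$ competitors, which is exactly what forces $3w\le n$, and harmless to the valency bookkeeping, since each block is still a competitors-only extension and so all the closedness and inclusion facts of Proposition~\ref{prop:props-val} keep applying. The disjointness of the tails is not needed for the valency identities themselves; it is recorded here because it is precisely what the subsequent window-overwriting indistinguishability argument consumes. Keeping the per-pair choice of $\gamma,\gamma'_1,\gamma'_2,k$ separate from the single global choice of $\beta$ is the only remaining bookkeeping subtlety.
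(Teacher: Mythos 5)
Your proposal is correct and takes essentially the same route as the paper's proof: the same one-time choice of $\beta$ by extending $\alpha$ while the valency $\{x\}$ is preserved and invoking wait-freedom, the same per-pair block construction with the $3(w-1)+2 = 3w-1 \leq n-1$ count, and the same operation-counting argument for the second property. Your packaging of that last step as ``$z_m \to \infty$, hence eventually distinct from $y$ and not eventually constant'' is just a streamlined form of the paper's choice of a threshold $\ell$ followed by a non-constancy argument for $k \geq \ell$, so there is no substantive difference.
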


\begin{proof}
Let $\beta$ be any competitors-only extension of $\alpha$.
By Lemma~\ref{lemma:closed-wind-ts}, $|W(\alpha \beta p)| = 1$.
By Lemma~\ref{lemma:supervalent-wind-ts}, $W(\alpha)~=~\emptyset$, and
hence, by the definition of valency,
$decide() \notin L(\alpha \beta)$. Let $W(\alpha p) = \{x\}$.
We extend $\alpha$ as follows:
we let a competitor take its next step as long as this yields a competitors-only extension $\beta$
such that $W(\alpha \beta p) = \{x\}$.
Such extension $\beta$ cannot be infinite,
because if so, there is a finite prefix $\beta'$ of $\beta$ such that
$W(\alpha \beta' p) = \{x\}$, $decide() \notin L(\alpha \beta')$
and a competitor completes more than $x$ operations in $\alpha \beta'$;
and as $L$ is a linearization function, there is no extension of $\alpha \beta' p$ in which
$decide()$ returns $x$, contradicting that $W(\alpha \beta' p) = \{x\}$.
Pick any such finite extension $\beta$.
Thus, we have $W(\alpha \beta p) = \{x\}$,
and for every competitor $q$, $W(\alpha \beta q p) \neq \{x\}$.
As already said, $|W(\alpha \beta q p)| = 1$, hence $W(\alpha \beta p) \cap W(\alpha \beta q p) = \emptyset$.

To prove the second claim, first note the assumption $3w \leq n$ implies that there are
sequences $\gamma, \gamma'_1, \gamma'_2$ of competitors satisfying that
$|\gamma| = |\gamma'_1| = |\gamma'_2| = w$,
$\gamma = q \lambda$, $\gamma'_1 = q' \lambda'_1$, $\gamma'_2 = q' \lambda'_2$, and
$\lambda, \lambda'_1, \lambda'_2$ being competitor-disjoint. Pick any such sequences.
Let $W(\alpha \beta \gamma p) = \{y\}$.
Let $\ell \geq 0$. Proposition~\ref{prop:props-val}(\ref{prop:closed-ext}) implies that
$\alpha \beta \gamma p \gamma'_1 q'^\ell \gamma'_2$ is closed,
and Proposition~\ref{prop:props-val}(\ref{prop:inclusion}) and Lemma~\ref{lemma:closed-wind-ts}
that $W(\alpha \beta \gamma p \gamma'_1 q'^\ell \gamma'_2) = \{y\}$.
As already observed, $|W(\alpha \beta \gamma'_1 q'^\ell \gamma p \gamma'_2)|~=~1$.
Let $W(\alpha \beta \gamma'_1 q'^\ell \gamma p \gamma'_2) = \{z\}$.
It was observed too that $decide() \notin L(\alpha \beta \gamma'_1 q'^\ell \gamma)$.
Due to wait-freedom, there are values of $\ell$ such that $q'$ completes more than $y$ operations in $\alpha \beta \gamma'_1 q'^\ell \gamma$.
Pick any such $\ell$, and consider any $k \geq \ell$.
Since $L$ is a linearization function, there is no extension of $\alpha \beta \gamma'_1 q'^k \gamma$ in which $decide()$ outputs $y$.
Thus, $y \neq z$,
and hence $W(\alpha \beta \gamma p \gamma'_1 q'^k \gamma'_2) \cap W(\alpha \beta \gamma'_1 q'^k \gamma p \gamma'_2) = \emptyset$,
for any $k \geq \ell$.
Using similar reasoning, we can argue that it is not true
$W(\alpha \beta \gamma'_1 q'^k \gamma p) = W(\alpha \beta \gamma'_1 q'^{k+1} \gamma p)$, for every $k \geq \ell$.
Thus, there is a $k \geq \ell$ such that $W(\alpha \beta \gamma'_1 q'^k \gamma p) \neq W(\alpha \beta \gamma'_1 q'^{k+1} \gamma p)$,
and moreover $W(\alpha \beta \gamma'_1 q'^k \gamma p) \cap W(\alpha \beta \gamma'_1 q'^{k+1} \gamma p) = \emptyset$.
Finally, by Proposition~\ref{prop:props-val}(\ref{prop:inclusion}) and Lemma~\ref{lemma:closed},
$W(\alpha \beta \gamma'_1 q'^k \gamma p q') = W(\alpha \beta \gamma'_1 q'^k \gamma p)$.
\end{proof}

To reach a contradiction, assume that $n \geq 4$ and $3w \leq n$,
and consider a competitors-only extension $\beta$ of $\alpha$ as stated in Lemma~\ref{lemma:to-reach-contr-wind-ts}.
We complete the proof by analyzing the steps processes are poised to take at the end of
$\alpha \beta$, in each case reaching a contradiction using Lemma~\ref{lemma:indist}.
In the first three cases we use the property of $\beta$ stated in Lemma~\ref{lemma:to-reach-contr-wind-ts}(\ref{prop:contr-simple-wind-ts}),
and in last two cases we additionally use the property stated in Lemma~\ref{lemma:to-reach-contr-wind-ts}(\ref{prop:contr-complex-wind-ts}).

\begin{itemize}

\item A competitor is poised to read. Let $q$ and $q'$ be competitors
such that $q$ is poised to read and $q' \neq q$.
Then, $p$ and $q'$ cannot distinguish between $\alpha \beta p$ and $\alpha \beta q p$,
which contradicts Lemma~\ref{lemma:indist},
as $\alpha \beta p$ is closed and $W(\alpha \beta p) \cap W(\alpha \beta q p) = \emptyset$.

\item A competitor $q$ and the referee $p$ access distinct base objects.
No process can distinguish between $\alpha \beta p q$ and $\alpha \beta q p$.
Proposition~\ref{prop:props-val}(\ref{prop:closed-ext}) implies that $\alpha \beta p q$ is closed,
and Proposition~\ref{prop:props-val}(\ref{prop:inclusion}) gives that
$W(\alpha \beta p q) \cap W(\alpha \beta q p)~=~\emptyset$.
This contradicts Lemma~\ref{lemma:indist}.

\item  All processes write to the same register.
For every pair of distinct competitors $q$ and $q'$,
$p$ and $q'$ cannot distinguish between $\alpha \beta p \gamma$ and $\alpha \beta q p \gamma$,
where $\gamma$ is any sequence with all the $n-3$ competitors distinct to $q$ and $q'$.
Propositions~\ref{prop:props-val}(\ref{prop:inclusion}-\ref{prop:closed-ext})
imply that  $\alpha \beta p \gamma$ and $\alpha \beta q p \gamma$ are closed and
$W(\alpha \beta p \gamma) \cap W (\alpha \beta q p \gamma) = \emptyset$,
which contradicts Lemma~\ref{lemma:indist}.

\item  All processes $T\&S()$ the same test\&set base object $R$.
By Lemma~\ref{lemma:to-reach-contr-wind-ts}(\ref{prop:contr-complex-wind-ts}), we have
$W(\alpha \beta \gamma p \gamma'_1 q'^k \gamma'_2) \cap W(\alpha \beta \gamma'_1 q'^k \gamma p \gamma'_2) = \emptyset$.
At the end executions $\alpha \beta \gamma p \gamma'_1 q'^k \gamma'_2$ and
$\alpha \beta \gamma'_1 q'^k \gamma p \gamma'_2$,
$p$ is in the same state, as in both of them it obtains $false$ from its $R.T\&S()$.
Note that the state of the shared memory is the same at the end of both executions.
Moreover, the executions are indistinguishable to any
competitor $q'' \neq q,q'$, which exists since $n \geq 4$.
Since the two executions are closed, this contradicts Lemma~\ref{lemma:indist}.

\item  All competitors write to $R$, and the referee reads $R$.
Consider again executions $\alpha \beta \gamma p \gamma'_1 q'^k \gamma'_2$ and $\alpha \beta \gamma'_1 q'^k \gamma p \gamma'_2$.
Observe that, at the end of both executions,
$q$ is in the same state as it only writes to $R$ after $\alpha \beta$,
and $p$ is in the same state too as it reads from $R$ what the processes in $\gamma$ write.
As after $\alpha \beta$, the competitors in $\gamma'_1$ are poised to write to $R$,
their state and the state of the shared memory is the same at the end of
$\alpha \beta \gamma p \gamma'_1 q'^k$ and $\alpha \beta \gamma'_1 q'^k$.
For the time being, suppose that $q'$ is poised to write to $R$ at the end of $\alpha \beta \gamma'_1 q'^k$;
this implies that  $q'$ is poised to write to $R$ at the end of $\alpha \beta \gamma p \gamma'_1 q'^k$ too.
Then, the state of the shared memory is the same at the end of
$\alpha \beta \gamma p \gamma'_1 q'^k \gamma'_2$ and $\alpha \beta \gamma'_1 q'^k \gamma p \gamma'_2$ (and the state of $q'$ too),
and thus $p$ and $q$ cannot distinguish between the executions.
Since both executions are closed, which contradicts Lemma~\ref{lemma:indist}.

We complete the proof of this case by arguing that indeed it is the case that $q'$ is poised to write to $R$ at the end of $\alpha \beta \gamma'_1 q'^k$.
We now use that $W(\alpha \beta \gamma'_1 q'^{k+1} \gamma p) \cap W(\alpha \beta \gamma'_1 q'^k \gamma p q') = \emptyset$,
by Lemma~\ref{lemma:to-reach-contr-wind-ts}(\ref{prop:contr-complex-wind-ts}).
Using a similar line of reasoning as in the paragraph above, we can see that
if $q'$ is poised to read $R$ or access an object $R' \neq R$, then
$p$ and $q$ cannot distinguish between $\alpha \beta \gamma'_1 q'^{k+1} \gamma p$ and  $\alpha \beta \gamma'_1 q'^k \gamma p q'$,
which contradicts Lemma~\ref{lemma:indist}.
\end{itemize}

In all cases we reach a contradiction. Thus, we have:

\begin{theorem}
\label{thm:ll-contest-from-window-ts}
If $n \geq 4$ and $3w \leq n$, the implementation $A$ cannot use only $w$-window registers and test\&set objects.
\end{theorem}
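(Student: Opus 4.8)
The plan is to mirror the valency argument used for the read/write case (Theorem~\ref{thm:no ll contest from read write}), now accounting for the richer base objects, and argue by contradiction: assume $A$ uses only $w$-window registers and test\&set objects with $n \geq 4$ and $3w \leq n$. The engine of the argument is the competitors-valency $W(\cdot)$ together with the indistinguishability lemma (Lemma~\ref{lemma:indist}): whenever a closed execution is indistinguishable to the referee $p$ and some competitor $q$ from another execution, their valencies must intersect. The whole proof consists of arranging enough indistinguishable pairs of closed executions with \emph{disjoint} valencies, each of which contradicts this lemma.

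First I would lean on the two structural facts already established for these objects — that closed executions have singleton valency (Lemma~\ref{lemma:closed-wind-ts}) and supervalent executions have empty valency (Lemma~\ref{lemma:supervalent-wind-ts}), both holding under $n \geq 4$ and $w \leq n-2$. Then, invoking Lemma~\ref{lemma:helping}, I would fix a helping supervalent execution $\alpha$, so that $\alpha\beta p$ is closed for every competitors-only extension $\beta$. The crux is to produce, via Lemma~\ref{lemma:to-reach-contr-wind-ts}, a competitors-only extension $\beta$ such that (i) the referee's step already commits the outcome, giving $W(\alpha\beta p) \cap W(\alpha\beta q p) = \emptyset$ for every competitor $q$, and (ii) a finer disjointness persists after interleaving whole blocks of competitor steps.

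With $\beta$ in hand I would perform a case analysis on the steps the processes are poised to take at the end of $\alpha\beta$. Property (i) already eliminates every case except those in which the referee and all competitors act on a single shared object: if some competitor is poised to read, or if the referee and a competitor touch distinct objects, then swapping two steps yields indistinguishable closed executions whose valencies are disjoint, contradicting Lemma~\ref{lemma:indist}. This forces either all processes to $T\&S$ a common object, or all competitors to write a common register $R$ while the referee reads it.

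The hard part, and the sole reason for the stronger hypothesis $3w \leq n$, is the window-register write case. Unlike a plain register, a $w$-window register remembers its last $w$ writes, so one competitor's write does not erase another's; to make two executions indistinguishable one must \emph{flood} $R$ with $w$ writes by distinct competitors, overwriting the entire window. Since the argument needs three such floods at once — the block $\gamma$ masking $q$'s write, and the blocks $\gamma'_1,\gamma'_2$ masking $q'$'s interleaving in property (ii) — one requires three pairwise-disjoint groups of $w$ competitors, i.e. exactly $3w \leq n$. I would use property (ii) of Lemma~\ref{lemma:to-reach-contr-wind-ts} to pump $q'$ enough times that $decide()$ can no longer legally return the committed value, then exhibit the flooded executions $\alpha\beta\gamma p\gamma'_1 q'^k\gamma'_2$ and $\alpha\beta\gamma'_1 q'^k\gamma p\gamma'_2$ as indistinguishable to $p$ and $q$ (and, in the $T\&S$ subcase, to a spare competitor $q'' \neq q,q'$, available since $n \geq 4$), with a final application of Lemma~\ref{lemma:indist} delivering the contradiction in every case.
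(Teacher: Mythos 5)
Your plan follows the paper's own proof almost step for step---the same structural lemmas (Lemmas~\ref{lemma:closed-wind-ts}, \ref{lemma:supervalent-wind-ts}, \ref{lemma:helping}, \ref{lemma:to-reach-contr-wind-ts}), the same helping supervalent execution, the same critical extension $\beta$, and the same flooding idea with Lemma~\ref{lemma:indist} as the engine---but your case analysis has a genuine hole. You claim that property (i) leaves only two cases: all processes apply $T\&S()$ to a common object, or all competitors write a common register $R$ while the referee \emph{reads} it. That is false: the case where every process, \emph{including the referee}, is poised to \emph{write} the same window register survives your stated eliminations. Swapping $q$'s and $p$'s writes does not yield indistinguishable configurations there (for $w \geq 2$ both writes remain visible in the window), so the case needs its own argument. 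The paper treats it separately, again by flooding: it compares $\alpha \beta p \gamma$ with $\alpha \beta q p \gamma$, where $\gamma$ consists of the $n-3$ competitors other than $q$ and $q'$; since $p\gamma$ contributes $n-2 \geq w$ writes, $q$'s write is flushed out of the window, the two closed executions become indistinguishable to $p$ and $q'$, and Lemma~\ref{lemma:indist} gives the contradiction. Your toolkit contains exactly this move, but as written your enumeration would let this case through unrefuted.

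A second, smaller slip concerns the accounting behind $3w \leq n$. You ask for ``three pairwise-disjoint groups of $w$ competitors,'' but there are only $n-1$ competitors, so fully disjoint groups would require $3w \leq n-1$, and your construction would fail precisely at the boundary $n = 3w$, which the theorem covers. The paper's Lemma~\ref{lemma:to-reach-contr-wind-ts} avoids this by sharing $q'$ between the two floods, taking $\gamma = q\lambda$, $\gamma'_1 = q'\lambda'_1$, $\gamma'_2 = q'\lambda'_2$ with only the tails $\lambda, \lambda'_1, \lambda'_2$ competitor-disjoint, so just $3w-1$ distinct competitors are needed. If you invoke that lemma exactly as stated this is harmless, but your own explanation of why $3w \leq n$ suffices is off by one.
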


Lemma~\ref{lemma:reductions} and Theorem~\ref{thm:ll-contest-from-window-ts} imply:

\begin{corollary}
\label{cor:no wf sl counter from window}
If $n \geq 4$ and $3w \leq n$, there is no wait-free strongly linearizable implementation
of snapshot, max register, counter,
fetch\&inc-rement or fetch\&add that uses only $w$-window registers and test\&set objects.
\end{corollary}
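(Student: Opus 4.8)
The plan is to argue by contradiction: assuming that $A$ uses only $w$-window registers and test\&set objects with $n \geq 4$ and $3w \leq n$, I would expose two closed executions that are indistinguishable to the referee $p$ and one competitor, yet have disjoint competitors-valencies, contradicting Lemma~\ref{lemma:indist}. First I would invoke Lemma~\ref{lemma:helping} to fix a helping supervalent execution $\alpha$, so that every competitors-only extension $\beta$ of $\alpha$ makes $\alpha\beta p$ closed (and hence, by Lemma~\ref{lemma:closed-wind-ts}, of singleton valency). Then I would apply Lemma~\ref{lemma:to-reach-contr-wind-ts}, which uses the hypothesis $3w \leq n$ to produce a finite competitors-only extension $\beta$ with two disjointness properties: the \emph{simple} one, $W(\alpha\beta p) \cap W(\alpha\beta q p) = \emptyset$ for every competitor $q$, and the \emph{complex} one furnishing the three blocks $\gamma = q\lambda$, $\gamma'_1 = q'\lambda'_1$, $\gamma'_2 = q'\lambda'_2$ of size $w$ with $\lambda,\lambda'_1,\lambda'_2$ competitor-disjoint.

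With $\beta$ fixed, the heart of the argument is a case analysis on the base-object steps the processes are poised to take at the end of $\alpha\beta$. The three ``easy'' cases use only the simple property. If some competitor $q$ is poised to read, then prepending that read leaves $\alpha\beta p$ and $\alpha\beta q p$ indistinguishable to $p$ and to any spectator $q' \neq q$ (the read touches no memory), and $W(\alpha\beta p) \cap W(\alpha\beta q p) = \emptyset$ contradicts Lemma~\ref{lemma:indist}. If a competitor $q$ and the referee $p$ act on distinct objects, then $\alpha\beta p q$ (closed by Proposition~\ref{prop:props-val}(\ref{prop:closed-ext})) and $\alpha\beta q p$ commute and are indistinguishable to all, with valencies driven apart by the inclusion Proposition~\ref{prop:props-val}(\ref{prop:inclusion}). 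If all processes write the same window register, I would pad both $\alpha\beta p$ and $\alpha\beta q p$ by a block $\gamma$ of the remaining $n-3$ competitors; since $w \leq n-2$ the window retains only the last $w$ writes, so $q$'s extra write is forgotten and the two executions are indistinguishable to $p$ and to $q'$, again with disjoint valencies by inclusion.

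The two remaining cases are where I expect the real difficulty and where the complex property is needed. In both I would compare the interleavings $\alpha\beta\gamma p \gamma'_1 q'^k \gamma'_2$ and $\alpha\beta\gamma'_1 q'^k \gamma p \gamma'_2$, which Lemma~\ref{lemma:to-reach-contr-wind-ts}(\ref{prop:contr-complex-wind-ts}) guarantees have disjoint valencies. If all processes test\&set the same object, then $p$ observes $false$ in both (some earlier competitor already grabbed the object), the object ends in the same state, and the executions differ only in which of $q,q'$ ``won''; thus they are indistinguishable to $p$ and to a third competitor $q'' \neq q,q'$, which exists since $n \geq 4$. If instead all competitors write $R$ while $p$ reads $R$, then the size-$w$ block $\gamma$ immediately preceding $p$'s read fixes what $p$ reads in both, the leading $\gamma'_1 q'^k$ of the second execution is overwritten out of the window, and $q$ keeps the same state (it only writes), so $p$ and $q$ cannot tell the two apart. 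A subtlety here is that this last step first assumes $q'$ is poised to write $R$ at the end of $\alpha\beta\gamma'_1 q'^k$; I would discharge that assumption using the second half of the complex property, $W(\alpha\beta\gamma'_1 q'^{k+1}\gamma p) \cap W(\alpha\beta\gamma'_1 q'^k \gamma p q') = \emptyset$, to rule out $q'$ reading $R$ or touching any other object. The main obstacle is precisely the combinatorial bookkeeping that three disjoint length-$w$ competitor blocks can be carved out, which is exactly where $3w \leq n$ is consumed, together with checking that every padded interleaving stays closed so that Lemma~\ref{lemma:indist} is applicable.
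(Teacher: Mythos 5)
There is a genuine gap: your argument proves the wrong statement. Everything in your proposal---the helping supervalent execution $\alpha$, the extension $\beta$ from Lemma~\ref{lemma:to-reach-contr-wind-ts}, the five-way case analysis ending in contradictions with Lemma~\ref{lemma:indist}---is a (faithful) re-derivation of Theorem~\ref{thm:ll-contest-from-window-ts}, the impossibility of a wait-free strongly linearizable \emph{long-lived contest} implementation from $w$-window registers and test\&set. But the corollary is about snapshot, max register, counter, fetch\&increment and fetch\&add, and none of these objects appears anywhere in your proof. The valency machinery you invoke ($W(\cdot)$, closed/supervalent executions, Lemma~\ref{lemma:indist}) is defined specifically in terms of the contest object's $decide()$ and $compete()$ operations, so it cannot be applied verbatim to an implementation of, say, a counter; some bridge between the two kinds of objects is required, and your proposal contains none.

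The missing idea is the reduction of Lemma~\ref{lemma:reductions}: each of the five listed objects can itself be used to give a wait-free strongly linearizable implementation of the long-lived contest object (e.g., each competitor increments the counter and the referee reads it). Given that, the corollary follows by composition: if one of these objects had a wait-free strongly linearizable implementation from $w$-window registers and test\&set, then substituting that implementation into the reduction---using the fact that composing strongly linearizable implementations yields a strongly linearizable implementation---would produce a wait-free strongly linearizable long-lived contest implementation from $w$-window registers and test\&set alone, contradicting Theorem~\ref{thm:ll-contest-from-window-ts}. This two-step reduction is the paper's entire proof of the corollary; the case analysis you reconstructed is the proof of the theorem, which the corollary takes as given.
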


\section{Discussion}

We introduced two variants of a contest object, one-shot and long-lived, which allow us to prove
impossibility results for lock-free and wait-free strongly linearizable concurrent objects,
respectively. Although these objects are strictly weaker than consensus, they suffice to capture
the coordination constraints imposed by strong linearizability.

Our results extend prior impossibility results in several directions. 
In particular, for lock-free implementations, 
they apply not only to read/write and window registers, 
but also to executions that use atomic stacks as base objects. 
Unlike interfering primitives, stack operations neither commute nor overwrite, 
yet stacks still do not suffice to support lock-free 
strongly linearizable implementations.
Moreover, unlike earlier reduction-based arguments, 
our proofs do not rely on readability assumptions.

Beyond the individual results, the contest objects help clarify why strong linearizability for some concurrent objects is
difficult to obtain. Informally, strong linearizability requires that once a competition among
concurrent operations is resolved, its outcome remains fixed in all extensions of the execution.
The results show that enforcing such consistent refereeing already requires substantial
coordination power, even when only a single distinguished process observes the outcome.

The proofs are based on valency-style arguments that apply 
across different progress conditions and classes of base objects. 
They indicate that strong linearizability enforces irrevocable
resolution of concurrency, 
even in settings where no agreement among processes is required.

Several directions remain open. One is whether contest objects, or variants thereof, can be used
as building blocks in strongly linearizable implementations that are lock-free or wait-free when
stronger primitives are available. Another is to extend Theorem~\ref{thm:ll-contest-from-window-ts} 
to window registers with $w \le n-1$.
We remark that the restriction $3w \le n$ (required for Lemma~\ref{lemma:to-reach-contr-wind-ts}(2)) 
stems from the need to maintain several disjoint sets of competitors 
whose actions cannot be simultaneously observed through a $w$-window register.
Finally, it would be interesting to identify a variant of the contest object that captures the impossibility
of lock-free strongly linearizable stacks from window registers and interfering primitives. Our results already show 
that such variant, if it exists, has to be strictly weaker than the contest object.

%

\bibliographystyle{plain}
\bibliography{references}

\end{document}